\newtheorem{itheorem}{Theorem}
\newtheorem{btheorem}{Theorem}
\newtheorem{theorem}{Theorem}[section]
\newtheorem{lemma}[theorem]{Lemma}
\newtheorem{proposition}[theorem]{Proposition}
\newtheorem{claim}[theorem]{Claim}
\newtheorem{definition}[theorem]{Definition}
\newcommand{\trans}{{\mathcal{C}}}
\newcommand{\EIP}[0]{EIP-1559 } 
\newcommand{\DOUBLEBLIND}[1]{#1} 
\newcommand{\mbe}[1]{{\color{red} #1}} 
\newcommand{\comment}[1]{}
\title{On the Welfare of EIP-1559 with Patient Bidders}
\begin{document}

\maketitle

\begin{abstract}The ``EIP-1559 algorithm'' is used by the Ethereum blockchain to assemble transactions into blocks. While prior work has  studied it under the assumption that bidders are ``impatient'', we analyze it under the assumption that bidders are ``patient'', which better corresponds to the fact that unscheduled transactions remain in the mempool and can be scheduled at a later time. We show that with ``patient'' bidders, this algorithm produces schedules of near-optimal welfare, provided  it is given a mild resource augmentation (that does not increase with the time horizon).  We prove some generalizations of the basic theorem, establish lower bounds that rule out several candidate improvements and extensions, and propose several questions for future work.
\end{abstract}

\section{Introduction}

\subsection{Background on Blockchains}

Blockchains like Bitcoin \cite{bitcoin} or Ethereum \cite{eth} operate by repeatedly aggregating transactions into blocks.  As a first approximation,
each transaction requires some ``size'' and has some value for its user, and 
each block has some size limit, where the notion of ``size'' is defined by the
blockchain.\footnote{E.g. ``gas'' for Ethereum or (virtual-)bytes for Bitcoin.}
Each block is assembled by a different ``operator''\footnote{In Bitcoin these are the ``miners''  and in Ethereum these are the ``validators''.} 
which the blockchain protocol designer
tries to motivate to maximize the
total value 
subject to the block's capacity constraints.  A key difficulty here is that both users and operators are selfish and strategic,  
so the mechanism that specifies which transactions are accepted to the block and how much do their users pay and to whom
must take such strategic behavior into account.  A large literature has studied such mechanisms.  While Bitcoin's ``pay your bid'' mechanism is relatively simple (but
not incentive compatible for the users), Ethereum's mechanism \cite{1559}, known as EIP-1559, is more sophisticated and, following \cite{R21}, 
a large literature (e.g. \cite{LMRSP21,CS23,GTW24,AGHH23}) studies its
strategic properties.

In this paper we undertake an algorithmic analysis of this protocol and the family of 
protocols that vary its various parameters.   
Specifically, we ask to what extent it succeeds in maximizing the social welfare given the size constraints.  Our main result is essentially positive: for ``patient bidders'', the social welfare achieved by EIP-1559, 
with a slight tweak of parameters, is close to optimal.  Before specifying more precisely what ``close to optimal'' means, we need to be more specific about our time-model of transactions.  

Blocks are created one at a time\footnote{For Ethereum a block every 12 
seconds and for Bitcoin a block every 10 minutes on the average.} and
transactions arrive continuously through time.  Much analysis of blockchains assumes that a transaction that arrives at a certain point in time {\em must} 
be scheduled in the immediately next block or loses all value for its user.  This is called the model of ``impatient users''.  In reality, however, blockchains
operate under an opposite assumption and submitted transactions remain in the ``mempool'' until accepted, 
and thus may be scheduled at any future block.  
The model that assumes that transactions can be scheduled at any future time
without losing their value, which is closer to how blockchains actually operate, is termed a 
model with 
``patient users'',
and has only received modest attention \cite{N23, HLM21, PS24, GY24}. 
A more general model may consider a mix of patient, impatient, and also ``partially patient'' users, who may have some time-sensitivity, e.g., model the value of a transaction as decreasing with its execution time \cite{HLM21, PS24, GY24}. 
Unfortunately, current mainstream 
blockchain transaction-fee mechanisms
do not allow the user to express any time preferences.
We show
the near-optimality of EIP-1559 for fully patient users in the worst case, i.e., without any stochastic
or distributional assumptions on the demand.\footnote{
It is not difficult to observe that if one looks, instead, at {\em impatient users}, then 
any algorithm that {\em relies on historical prices} to set 
inclusion criteria for the current block (as does EIP-1559) may catastrophically 
fail to achieve any decent social welfare under adversarial 
conditions.  
We will also show 
that no {\em online algorithm} can achieve near-optimality for a mix of patient and either impatient or {\em partially patient} users (even 
if they are ``very patient''); see \Cref{sec:intro-imp}.}

\subsection{The EIP-1559 algorithm and its parameters}

The EIP-1559 algorithm assembles a sequence of blocks {with {\em target average size} of $B$ per block.}%
\footnote{Ethereum's choice is $B=18M$ 
gas.}  
I.e. the original constraint of maximum block size is relaxed, 
and is replaced by effectively limiting the {\em average} block size to be $B$, while relaxing the  constraint of {\em maximum}
block size to being only slightly larger than $B$.\footnote{Maximum size of only $2B$, a factor $2$ increase that is independent of the parameters of the input (such as the range of per-unit values and the time horizon).}  This relaxation is justified by the specific constraints 
of the Ethereum blockchain\footnote{I.e. it was deemed  that the ``syncing time'' of ``full nodes'' (i.e., reading the entire history of the block chain) is more significant than the maximum block size limitation, and for it, only the average block size is important.} 
and may be viewed as a type of resource augmentation 
which was not
technically studied in the literature of bin-packing and knapsack algorithms, and
may be of independent interest.\footnote{This relaxation may be related in
motivation to open-ended bin packing (e.g. \cite{E22}), but not technically so.}

Here is how it operates: for every block $t$ it chooses a price $p_t$ 
(according to an algorithm that we will explain shortly) and accepts only transactions 
whose value per unit
is at least the price $p_t$.  A difficulty arises if there are too many transactions 
with such high value per unit, 
as the EIP-1559 algorithm does not allow any block to be of total 
size that is larger than $c$ times 
the target block size, with the parameter 
choice $c=2$.\footnote{I.e. the maximum allowed block size 
in Ethereum is $36M$ 
gas.}  In this
case only a subset of the transactions that fit into the maximum block size is chosen, 
with the subset 
determined by an additional ``tip'' bid that is strategically chosen by each user.  
In this paper we will not undertake a strategic analysis, and 
we will analyze the protocol under the assumption that the subset of transactions is being chosen by an {\em adversary}\footnote{This naturally only strengthens our main result.}, making only the following mild assumption: 
the adversary must pick a %
maximal (by inclusion) set of transactions that fit into the allowed {\em maximum} block size.%
\footnote{This is a mild assumption as any rational operator that is myopic (cares only about the current block and disregards the future) 
is never better off by excluding transactions that can simply be added to the block.
In particular, the presence of (non-negative) tips only encourages the inclusion of additional transactions in the block.}

The basic idea of choosing the price $p_t$ for block $t$ is rather simple: if the previous block size was above the target block size $B$,
then we need to increase the price, and if it was below the target then we decrease the price.
The increase or decrease is multiplicative according to the following formula:
$p_{t+1}=p_t \cdot (1+ \eta (Q_t-B)/B)$, where $Q_t$ denotes
the total size of block $t$, 
and $\eta$ is a small constant picked in Ethereum to be 
$\eta=1/8$. This formula should be viewed as an approximation to the theoretically
more elegant price-adjustment formula\footnote{When a similar adjustment mechanism was later adopted for ``blob gas'' 
\cite{4844} then a better approximation to the desired exponential form was taken.}
$p_{t+1}=p_t \cdot e^{\eta \cdot (Q_t-B)/B}$ which we will analyze in this paper.
Effectively, there is also a minimum
price\footnote{This happens in the Ethereum implementation
because all arithmetic is done in integers denominated in Wei, i.e. $10^{-18}$ ETH,
and the price cannot become 0.} 
$p_{min}$,  
and if the previous formula drops below it then this minimum price is taken instead. The
initial price $p_1$ was chosen arbitrarily\footnote{The first Ethereum EIP-1559 
block was given a price of 1 GWei, i.e. $10^{-9}$ ETH.}.  In the
rest of this paper whenever we talk about ``the EIP-1559 algorithm'',
we actually consider the algorithm with any setting 
of these five parameters: the target
block size $B$, the maximum block size $c \cdot B$, the 
price adjustment parameter $\eta$, the minimum price $p_{min}$ and
the initial price $p_1$ (with $B>0, c>1, \eta>0, p_{min} > 0$, and $p_1\geq p_{min}$).

This algorithm produces blocks whose size may vary in the range $0$ to $c \cdot B$. The
{\em target} block size $B$ is reached only in an average sense, but in a rather strong
sense of averaging that we wish to formalize as follows.

\begin{definition} 
{For a monotone non-decreasing function $\Delta:\mathbb{N}\rightarrow \mathbb{N}$,
a schedule is said to have \emph{average block size limit $B$ with slackness $\Delta$,
}} if for any number of times steps $k$, the total size of transactions during any $k$ consecutive blocks 
is at most $(k+\Delta(k)) \cdot B$. 
{When $\Delta$ is a constant function (for constant $Z$ it holds that $\Delta(k)=Z$), we abuse notation and simply write $\Delta=Z$.} 
\end{definition}

The basic property of the
EIP-1559 algorithm 
is that it indeed produces blocks whose average size is the target size $B$.\footnote{Our analysis is for the theoretically
clean version with $p_{t+1}=p_t \cdot e^{(\eta (Q_t-B)/B)}$. As shown in \cite{LRMP22}, the variant with 
$p_{t+1}=p_t \cdot (1+ \eta (Q_t-B)/B)$ results in a slightly larger average block size.}  
While this general fact is well known, we formalize it in our notion of ``average block size limit $B$ with slackness $\Delta$'' as follows:

\begin{proposition}
    Any schedule produced by an EIP-1559 algorithm with parameters
    $(B, c, \eta, p_{min}, p_1)$
    has average block size limit $B$ with slackness 
    $\Delta = \frac{1}{\eta}\cdot \ln \frac{H}{L}+ {(c-1)}$, where $H$
    and $L$
    are upper and lower bounds on the per-unit value 
    of any input transaction, 
    and on the constants $p_1$ and $p_{min}$.
\end{proposition}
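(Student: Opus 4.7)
The plan is to exploit the multiplicative (exponential) form of the update rule $p_{t+1} = \max(p_{min}, p_t \cdot e^{\eta(Q_t-B)/B})$: taking logarithms linearizes it, so summing over any window telescopes into just the difference of the endpoint log-prices. Concretely, for blocks $t+1,\dots,t+k$, I would note that the floor only ever raises the price above the formula value, giving the one-sided inequality
\[
\ln p_{t+k+1} \;\geq\; \ln p_{t+1} + \sum_{i=1}^{k} \frac{\eta(Q_{t+i}-B)}{B},
\]
which, after rearrangement, yields $\sum_{i=1}^{k} Q_{t+i} \leq kB + \frac{B}{\eta}\ln(p_{t+k+1}/p_{t+1})$. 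So everything reduces to bounding $p_{t+k+1}$ from above and $p_{t+1}$ from below.

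The lower bound is immediate: $p_{t+1} \geq p_{min} \geq L$ by the floor and by hypothesis. The upper bound is the main (though still modest) obstacle, since a priori the exponential update could overshoot. I would prove by induction on $t$ that $p_t \leq H \cdot e^{\eta(c-1)}$ for every $t$. The base case uses $p_1 \leq H$. For the inductive step I split on the current price: if $p_t \leq H$ then the admitted block size satisfies $Q_t \leq cB$, so the update multiplies by at most $e^{\eta(c-1)}$ and stays within $H \cdot e^{\eta(c-1)}$; if instead $p_t > H$, then no transaction has per-unit value reaching the price, so $Q_t = 0$ and the update strictly decreases the price, which is harmless. In either sub-case the $\max$ with $p_{min}$ is also dominated since $p_{min} \leq H$.

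Putting the two bounds together gives
\[
\sum_{i=1}^{k} Q_{t+i} \;\leq\; kB + \frac{B}{\eta}\ln\!\left(\frac{H\cdot e^{\eta(c-1)}}{L}\right) \;=\; \left(k + \tfrac{1}{\eta}\ln(H/L) + (c-1)\right) B,
\]
which is exactly the claimed slackness. The only place I anticipate needing any care is the direction of the inequality induced by the price floor — it is important that the floor pushes $p_{t+1}$ \emph{up} relative to the unclamped formula, so that the telescoping still yields an upper bound (not a lower bound) on $\sum Q_{t+i}$; this is the reason the argument works uniformly regardless of how often the floor is active.
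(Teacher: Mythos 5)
Your proof is correct and follows essentially the same route as the paper's: telescope the logarithm of the multiplicative update over the window (noting the price floor only helps the inequality), lower-bound the starting price by $p_{min}\ge L$, and upper-bound the ending price by $H\cdot e^{\eta(c-1)}$ via induction. The only cosmetic difference is that you run the induction globally from $p_1\le H$ (legitimate here, since the statement assumes $H$ bounds $p_1$), whereas the paper, not assuming a bound on $p_1$ in its Section~2 version, proves the price bound only from the first nonempty block onward and accordingly starts the telescoping there.
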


For completeness we provide a proof, in our model and notation, 
in \cref{prop:EIP-slack}. 

\subsection{Our Main result}

Our main result states that the EIP-1559 algorithm 
is near-optimal for patient bidders, under some mild relaxations.  
Specifically, 
its allocation over a slightly larger time horizon has social welfare that is
close to that of the optimal, clairvoyantly chosen, 
schedule, that has maximum block size $B$. 

Maximizing the welfare is a challenging problem. Even for a single block it is a knapsack problem, and thus NP-hard. 
On top of the computational issue, there are two additional hurdles that the EIP-1559 algorithm has to overcome.
First, the scheduling problem that the EIP-1559 algorithm faces is an online problem, and so the algorithm is an online algorithm.
While the offline (clairvoyant) optimal benchmark can take into account
future transactions that may
``fit with'' some existing transactions, online
algorithms cannot do so.  Nicely, as we will show, the relaxation of the maximum block
size constraint to an average block size constraint suffices to compensate for this online disadvantage,
provided that the maximum block size limit is at least twice that of the benchmark, 
i.e. $c \ge 2$.

Secondly, the EIP-1559 algorithm is not only an online algorithm, but is further constrained as it is a  ``pricing-based'' algorithm. It has to make decisions that are based on prices (so it cannot pick any allocation), and moreover, the price at each block is based only on on-chain information from previous blocks, without any knowledge of the pending transactions in the mempool.

Such algorithms must ``search'' 
for the correct prices, which may be difficult under adversarial conditions.  
As we show, the
EIP-1559 algorithm manages to do so with a loss that can be overcome by extending the time horizon by the 
number of blocks 
that is logarithmic in the per-unit value range. 
This
single logarithmic loss is global over the whole range of blocks, even though prices may keep fluctuating. 

{Let us use $ALG$ to denote the schedule produced by an \EIP algorithm, and for any schedule $S$ let $SW(S,[1,t])$ denote the total social welfare of $S$ up to time $t$.}
\begin{itheorem}\label{thm:basic:intro}
        Fix an \EIP algorithm with parameters $(B, c, \eta, p_{min},p_1)$ 
        for some $c>2$. 
        Consider any (adversarially chosen) sequence $\trans$ of input transactions where each transaction $i\in \trans$ has per-unit value $v_i$ satisfying
        $e^{\eta} \cdot p_{min} \le v_i \le v_{max}$ 
        for some 
        $v_{max}\ge p_1$. Let $OPT_B$ be an arbitrary schedule with maximum block size $B$.
        Then for any time $T$ it holds that 
          $$SW(ALG,[1,T+\Gamma])
    \ge (1-\delta) \cdot 
    SW(OPT_B,[1,T])
    $$ 
for
     $\delta = 1-e^{-\eta}\leq \eta$ and
     {any integer}
     $\Gamma {\geq 
     \frac{1}
     {\eta}}\cdot \ln\left(\frac{v_{max}}{p_{min}}\right) \cdot \max \{1,\frac{1}{c-2}\} + c$.  
\end{itheorem}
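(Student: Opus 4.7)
My plan is a primal--dual style accounting in which the EIP-1559 prices $(p_t)_t$ serve as dual multipliers for $OPT_B$'s per-block capacity constraints. The analysis has three pieces: a lower bound on $SW(ALG)$, an LP-based upper bound on $SW(OPT_B)$, and a gap-closing argument using \cref{prop:EIP-slack} together with bidder patience.

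For the lower bound I use that any transaction scheduled by ALG in block $t$ has per-unit value at least $p_t$, so $SW(ALG,[1,T+\Gamma])\ge\sum_{t=1}^{T+\Gamma}p_t Q_t$. For the upper bound I apply weak LP duality to $OPT_B$'s max-welfare capacity LP with $(p_t)_{t\le T}$ as the block-capacity duals, obtaining $SW(OPT_B,[1,T])\le B\sum_{t=1}^{T}p_t+\sum_{i\in OPT_B}s_i(v_i-\hat p_i)^+$ where $\hat p_i=\min_{t\in[t_i^{\mathrm{arr}},T]}p_t$. The baseline term $B\sum_{t\le T}p_t$ is compared to the ALG lower bound via the price-evolution identity $p_t(Q_t-B)\le(B/\eta)(p_{t+1}-p_t)$ (from $e^x-1\ge x$), and after telescoping this leaves a residual governed by the ``utility term'' $\sum_{i\in OPT_B}s_i(v_i-\hat p_i)^+$, which must be charged against $SW(ALG)$.

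To bound the utility term I would exploit patience: for each $i\in OPT_B$ with $v_i>\hat p_i$, there is $t^*_i\in[t_i^{\mathrm{arr}},T]$ with $p_{t^*_i}=\hat p_i<v_i$, so $i$ is ALG-eligible at $t^*_i$. If $i\in ALG$ its value $v_i s_i$ appears directly in $SW(ALG)$; otherwise adversary inclusion-maximality forces $Q_{t^*_i}>cB-s_i\ge(c-1)B$ (a ``full'' block), since $s_i\le B$ for any $i\in OPT_B$. By \cref{prop:EIP-slack} the slackness bound $\sum_t(Q_t-B)\le\Delta B$ with $\Delta=(1/\eta)\ln(v_{\max}/p_{\min})+(c-1)$ gives at most $\Delta/(c-2)$ full blocks, the source of the $\max\{1,1/(c-2)\}\cdot(1/\eta)\ln(v_{\max}/p_{\min})$ piece of $\Gamma$; the remaining $+c$ absorbs boundary corrections at the endpoints of the horizon. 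The multiplicative factor $(1-\delta)=e^{-\eta}$ arises from the single-step price change $p_{t+1}/p_t\ge e^{-\eta}$: the threshold ALG imposes at a given block overshoots the ``ideal'' threshold by at most a factor $e^\eta$, translating to an $e^{-\eta}$ multiplicative loss on welfare.

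\textbf{Main obstacle.} The trickiest step is the case where a full ALG block is stuffed by the adversary with low-value transactions barely above $p_{t^*_i}$ and displaces a high-value OPT transaction $i$ with $v_i\gg p_{t^*_i}$; the local extra welfare $(Q_{t^*_i}-B)\,p_{t^*_i}$ of the full block is then too small to offset $s_i v_i$ directly. A purely local charging fails, and the resolution requires a global argument: combining the $\Delta/(c-2)$ bound on full blocks with the $\Gamma$ extra ALG blocks (during which the adversary's low-value fillers must eventually be depleted so displaced $i$'s can be scheduled later). I expect the cleanest route is a price-stratified analysis -- bucketing transactions by $e^\eta$-ratios of per-unit value and proving the welfare comparison bucket-by-bucket -- which naturally yields the $e^{-\eta}$ factor.
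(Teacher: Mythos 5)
Your primal--dual outline is a genuinely different route from the paper's, but as written it has two concrete gaps, one of which is a false step. First, the inference ``$\sum_t(Q_t-B)\le\Delta B$ by \cref{prop:EIP-slack}, hence at most $\Delta/(c-2)$ full blocks'' is invalid: underfull blocks contribute \emph{negative} terms to that sum, so arbitrarily many full blocks can be offset. Concretely, a repeating pattern of one block of size $cB$ followed by $c-1$ empty blocks has $\sum_t(Q_t-B)=0$ over every period while the number of full blocks grows linearly in $T$ (and the prices oscillate in a bounded band, consistent with \cref{prop:EIP-slack}). The paper never counts full blocks globally; it only bounds the number of \emph{consecutive} full blocks during a monotone price-rise phase (between the times $\alpha_\theta$ and $\beta_\theta$ in \cref{lem:cover}), where each full block multiplies the price by at least $e^{\eta(c'-1)}$ and the price range $[p_{min},v_{max}]$ is bounded --- that is where the $\frac{1}{\eta(c'-1)}\ln(v_{max}/p_{min})$ term in $\Gamma$ comes from. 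Relatedly, your telescoping inequality $p_t(Q_t-B)\le(B/\eta)(p_{t+1}-p_t)$ upper-bounds $\sum_t p_tQ_t$ relative to $B\sum_t p_t$, whereas the comparison you need is a \emph{lower} bound on $\sum_t p_tQ_t$ against $B\sum_{t\le T}p_t$; the inequality goes the wrong way for that purpose.

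Second, and more fundamentally, the charging of the utility term $\sum_{i}q_i(v_i-\hat p_i)^+$ --- which you correctly identify as the crux --- is not carried out. Your own ``main obstacle'' paragraph concedes that local charging fails (one full block can displace many OPT transactions of arbitrarily high value, and the adversarial tip ordering can keep displacing them), and the proposed resolution, ``bucketing transactions by $e^\eta$-ratios of per-unit value and proving the welfare comparison bucket-by-bucket,'' is essentially the paper's actual proof: \cref{lem:cover} shows that for \emph{every} threshold $\theta$ the capacity OPT devotes to value-$\ge\theta$ transactions in $[1,T]$ is at most the capacity ALG devotes to value-$\ge\theta e^{-\eta}$ transactions in $[1,T+\Gamma]$, and \cref{lem:intQ} integrates this over $\theta$ (the change of variables $\theta\mapsto\theta e^{-\eta}$ producing the $e^{-\eta}$ factor). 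Crucially, that argument never tracks whether a specific displaced transaction is eventually scheduled --- it only compares aggregate capacities above each threshold, using that after the price first reaches $\theta e^{-\eta}$ it never drops below it and all scheduled capacity thereafter counts. Until the bucket-by-bucket comparison is actually proved, the proposal does not constitute a proof; and once it is proved, the primal--dual scaffolding becomes redundant.
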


We thus see that the EIP-1559 algorithm can indeed compete with the optimal schedule, but after 
suffering two types of losses: first, there is a loss of a small $\delta \le \eta$ fraction
of the social welfare, where as the adjustment rate $\eta$ approaches $0$ this loss vanishes.  Second,
the algorithm is given $\Gamma$ additional rounds, where $\Gamma$
is inversely proportional to $\eta$.
Conceptually, if $T$ was given in advance 
we could balance between these two losses
(for any fixed $v_{max}, p_{min}$ and $c$)
by choosing $\eta = O(1/\sqrt{T})$, which would lose 
$O(1/\sqrt{T})$ fraction of welfare every step as well as require an additional $O(1/\sqrt{T})$ fraction of steps.\footnote{
This is somewhat in the vein of regret bounds that balance
between losing a small factor 
(that is proportional to the learning rate)
every step and 
suffering an additional additive loss that
is inversely proportional to the learning rate.  
Unlike in classic regret bounds the theorem 
does not combine both losses into a single $O(1/\sqrt{T})$ regret measure.
}

We remark that the EIP-1559 algorithm is not aware of the time horizon ($T$), and in practice it essentially can continue running indefinitely. The guarantee of the theorem holds concurrently for every time horizon $T$ (with  $v_{max}$ being the maximal per-unit value up to $T$), even with respect to the time horizon dependent optimal allocation $OPT(T)$.
  
Before looking more carefully at generalizations and limitations of this theorem, 
it is useful to compare
the result to two similar-in-spirit papers that study more difficult models
and consequently obtain much weaker 
results for their models.  The first is the analysis of \cite{FMPM21} of price-based mechanisms
in the model of impatient buyers. In this model they only obtain a constant factor 
approximation and only under
a stochastic model.  The second is the analysis of \cite{ADM24} of a 
generalization of EIP-1559 to
multiple resources.  
They obtained near-optimality (a regret bound somewhat similar to ours) for the 
case of impatient bidders,
but only relative to the weak benchmark of a single fixed price.

We actually prove a significant generalization of this basic theorem simultaneously 
along several dimensions {(see \cref{thm--ub--eip1559-optimal-frac} for the exact formal result)}:
\begin{enumerate}
    \item {\bf Maximum Block Size vs. Maximum Transaction Size:}
    The limitation on the maximum block size $c>2$ is required only when the largest
    transaction size $q_{max}$ can be as large as the full target block size $B$.  If we
    have a better bound on the maximum transaction size, then the limitation on $c$ can be relaxed
    to $c > 1 + q_{max}/B$, and in the bound on $\Gamma$ {the expression $1/(c-2)$ is replaced by $1/(c-1-q_{max}/B)$ (with a slight change to the additive term $c$ as well)}.  
    Notice that when we have any bound on 
    the maximum transaction size $q_{max}/B < 1$ then the actual choice made in Ethereum
    of $c=2$ suffices for obtaining a good approximation and losing only $\eta=1/8$ fraction
    of social welfare when allowed a small number of extra blocks.  

    \item {\bf A Stronger Benchmark with Average Block Size Constraints:}
    As mentioned above, the EIP-1559 algorithm's schedule 
    has {\em average block size} $B$, a constraint that is more relaxed than
    having {\em maximum block size} $B$ which is the benchmark in the theorem. 
    We extend the theorem to show that the algorithm also competes with
    the benchmark of schedules with {\em average block size}
    $B$ with any slackness $\Delta$, where the bound on the constant $\Gamma$ is 
    now increased by $\Delta$.  In this generalization, the EIP-1559 algorithm with 
    maximum block size $c > 1+q_{max}/B$ competes with any schedule with average block size $B$,
    without requiring any upper bound on the benchmark's maximum block size. 

    \item {\bf A Stronger Benchmark with Fractional Allocations:}
    One of the things we get  ``for free'' when we relaxed the maximum block size restriction
    to an average block size restriction, is that EIP-1559 turns out to also compete with any 
    {\em fractional} allocation that is allowed to arbitrarily split transactions among blocks or even
    only fractionally allocate transactions
    (even though EIP-1559 does not do that, obviously).   
\end{enumerate}

\subsection{Lower Bounds}

Let us now look deeper at the losses as stated in the main theorem.  Not only will we show
that EIP-1559 indeed may suffer such losses, but we will also prove that such losses must be
suffered by two classes of algorithms of
which EIP-1559 is a member: a loss for
the general class of online algorithms and an additional loss for a subclass of online algorithms 
that we term ``price-based algorithms'' that
operate by setting a price for each block that only depends on on-chain 
information from previous blocks (see 
\cref{def:price-based} for a formal definition).

\subsubsection{Maximum Block size}

The theorem requires that $c>2$, i.e.
that the 
EIP-1559 algorithm is allowed
{\em maximum} block size that is
more than twice the original block size.  
We prove such a blowup
in the {\em maximum} allowed block size
is required by {\em any online algorithm} that obtains a welfare guarantee like ours. 

The lower bound is proven even if $\Gamma$ and $\Delta$ can grow with $T$, as long as they grow as $o(T)$.
{Additionally, they hold even if the welfare approximation needs to be guaranteed only for one, known value $T$ (that is large enough).}

\begin{proposition} \label{prop:intro-lb-block}
   Any online algorithm that produces schedules with maximum block size $c \cdot B$, 
   for some $c<2$ {(with $c\geq 1$)} and average 
   block size limit $B$ with slackness $\Delta(T) = o(T)$, 
   even with an extension $\Gamma(T)=o(T)$, 
   must lose at least {$\delta_0=\Omega(2-c)$} 
   fraction of social welfare relative to schedules with 
   worst-case block size limit $B$ 
   (even when the values of all transactions are in $\{1,2\}$). 
\end{proposition}

Of special interest is the case $c=2$ which is the actual parameter used in Ethereum. 
While we do present an online algorithm (that can use information about non-scheduled transactions) with $c=2$ that loses no social welfare,
we show (\cref{sec:EIP-LB})
that for any horizon $T$ there is a possible input sequence where 
EIP-1559 with Ethereum's choice of parameter, $c=2$, does {\em not} compete with the optimum.  
We do note however that this is due to very large transactions
and, as mentioned above, if we limit the maximum transaction size by $q_{max}/B < 1$,
then a positive result emerges also for $c=2$.

\subsubsection{Dependence on Range of per-unit Values}

The theorem suffers a double loss: beyond 
allowing the algorithm $\Gamma$ more steps as well as having $\Delta$ slackness, 
we lose a fraction $\delta$ of welfare.  
Furthermore, both the number of extra steps $\Gamma$ and the slackness parameter $\Delta$
increase as a function of the range of possible per-unit values of transactions.  While
we exhibit an \emph{online algorithm} with zero loss of welfare and $\Delta=1, \Gamma=1$, we prove
that the dependence of $\Gamma$ or $\Delta$ on
the range of per-unit values is unavoidable for {\em price-based algorithms with good welfare guarantee}.

\begin{proposition}\label{prop:intro-value-range-lb}
Fix any price-based algorithm with average block size $B$, slackness $\Delta$, and extension $\Gamma$. 
There exists an instance with per-unit values in the range $[1,H]$ for which the algorithm does not obtain a $H^{1/4^{(\Gamma+\Delta)}}$  
fraction  of the optimal social welfare of a schedule with worst-case block size limit $B$ (with no slackness or extension).
Thus, if a price-based algorithm guarantees a constant fraction of this optimum, then $\Gamma+\Delta$ must grow as $\Omega(\log \log H)$.
\end{proposition}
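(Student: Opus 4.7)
The plan is to exploit the defining weakness of a price-based algorithm: its price $p_t$ at round $t$ depends only on past on-chain data and not on the current mempool, so an adversary who knows the algorithm can simulate its internal state, predict $p_t$ in advance, and adaptively choose which transactions to release at round $t$. Let $M = \Gamma + \Delta$. I would build the adversarial instance around a geometric ladder of per-unit values
\[
v_k = H^{1/4^k}, \qquad k = 0,1,\ldots,M,
\]
so that $v_0 = H$, $v_M = H^{1/4^M}$, and $v_{k-1} = v_k^4$, and I would have the adversary proceed in $M+1$ phases indexed by $k$.

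During phase $k$ the adversary tries to pin the algorithm between two bad alternatives. Given the algorithm's price $p_t$, the adversary in phase $k$ either releases a size-$B$ transaction at per-unit value just above $p_t$ (but no larger than $v_k$), which the maximality rule forces the algorithm to include in its block at welfare at most $v_k \cdot B$, while simultaneously introducing a ``shadow'' size-$B$ transaction at value $v_{k-1}$ that the clairvoyant optimum can fit into its strict-$B$ schedule but that cannot be included in the algorithm's block (because its price regime or capacity is already spoken for); or, if $p_t$ is already in a regime where no such trap is available, the adversary releases nothing and leaves the block empty, forcing the algorithm to make up the missing throughput out of its $(\Gamma + \Delta) \cdot B$ slack budget. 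Each non-empty phase-$k$ round yields an OPT/ALG welfare ratio of at least $v_{k-1}/v_k = v_k^3$, while each empty round consumes one unit of slack. An induction on $k$ should then show that by the end of phase $k$ the algorithm has either accumulated a cumulative welfare-loss factor of at least $v_k = H^{1/4^k}$, or spent at least $k$ units of its slack budget. Since only $M$ units are available, after phase $M$ the algorithm is pinned into the lossy alternative, yielding the claimed $H^{1/4^M}$ ratio; the corollary $\Gamma + \Delta = \Omega(\log \log H)$ then follows by setting $H^{1/4^M}$ equal to a constant and taking $\log \log$ of both sides.

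The main obstacle I anticipate is phase-transition bookkeeping. Between phases $k$ and $k+1$ the algorithm's price must drift from the $v_k$-regime down into the $v_{k+1}$-regime, a multiplicative drop of $v_{k+1}^3$, and the number of rounds this takes is algorithm-specific. To work against \emph{any} price-based algorithm the plan is to make each phase long enough that transient behaviour is negligible relative to the bulk of the phase, and to charge welfare foregone in transitional rounds against the algorithm's global $(T + \Gamma + \Delta)B$ capacity budget. The most delicate step is the slack-accounting: proving that every round where the algorithm leaves a near-empty block truly costs it one unit of slack, uniformly across all possible price trajectories a price-based algorithm could generate. This is what ultimately ties the construction to the combined parameter $\Gamma + \Delta$ rather than to $\Gamma$ or $\Delta$ separately, and compounding the per-phase losses across $M+1$ phases then gives the claimed bound.
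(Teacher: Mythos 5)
Your proposal takes a genuinely different route from the paper's, but as described it has gaps that I do not think can be repaired without importing the paper's central idea. The paper's proof is a pigeonhole/indistinguishability argument: set $T=\Gamma+\Delta$, $R=T+\Gamma$, and build $2^R+1$ scenarios, where scenario $m$ contains (all at time $1$) a large supply of size-$1$ transactions at each value $r^0,r^1,\dots,r^m$, with the adversary always serving the \emph{cheapest} eligible value. Since at each block a price-based algorithm's feedback is essentially one of two outcomes (empty, or full at the unique value level determined by its price), its behaviour over all scenarios is a binary decision tree of depth $R$ with at most $2^R$ leaves; two scenarios $m<m'$ collide, so on scenario $m'$ the algorithm never schedules anything above $r^m$ and loses a factor $r$, and $r^{2^R}\le H$ with $R\le 2(\Gamma+\Delta)$ gives $r=H^{1/4^{\Gamma+\Delta}}$. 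The doubly-exponential number of value levels is exactly what defeats binary search over prices, and it is where the $4^{\Gamma+\Delta}$ comes from.

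Your construction is missing this mechanism, and I see three concrete problems. First, your ladder has only $\Gamma+\Delta+1$ value levels, all known in advance; a price-based algorithm can binary-search for the highest populated level in $O(\log(\Gamma+\Delta))$ blocks and then exploit, so an adaptive phase adversary over so few levels cannot force the claimed loss against \emph{every} price-based algorithm. Second, the ``shadow'' transactions at value $v_{k-1}$ remain in the mempool forever, and the adversary's only in-run power is to pick a \emph{maximal} eligible subset: it cannot exclude an eligible high-value transaction from a block with room. So nothing stops the algorithm from later raising its price and sweeping up all the shadows; the paper avoids this by making the algorithm provably unable to distinguish whether the high-value transactions exist at all. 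Third, the induction invariant ``cumulative welfare-loss factor at least $v_k$'' obtained by compounding per-round ratios $v_{k-1}/v_k$ across phases is not a valid operation on additive welfare: the overall ratio $\sum_k W_k^{ALG}/\sum_k W_k^{OPT}$ is a weighted average of per-phase ratios, not their product, so per-phase losses do not multiply. The slack-accounting intuition (an empty block costs one unit of the $(\Gamma+\Delta)B$ budget) is sound and does appear implicitly in the paper's final accounting, but by itself it cannot deliver a loss that is doubly exponential in $\Gamma+\Delta$.
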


Note that there is an exponential gap between our lower bound (which grows at rate $\Omega(\log \log H)$) and the upper bound (that is only singly-logarithmic in the per-unit value range).  
While we do show that the EIP-1559 algorithm may indeed require such a singly-logarithmic 
additional number of blocks in order to compete with the optimum, closing the gap
for general price-based algorithms remains open.  Especially intriguing is the
possibility of designing a useful ``variant'' of EIP-1559 with 
$\Delta+\Gamma$ that only grows at a rate double-logarithmic in the per-unit value range.

\subsection{Model Extensions}

Given our basic positive result of EIP-1559's near-optimality, it is tempting to 
generalize this to natural extensions of the basic model.  Specifically, we look at two tempting
extensions: to ``partially patient'' bidders and to ``multi-dimensional fees''.  In both
cases we prove the impossibility of an extension,
and show that every online algorithm must lose at least some fixed constant fraction $\delta_0>0$
of the optimal social welfare.  We leave open the question of whether one can recover at
least some, smaller, constant fraction of optimal social welfare.

\subsubsection{Partially Patient Bidders}\label{sec:intro-imp}

All our results so far assumed ``patient bidders'', i.e. where a transaction's value
for its user remains the same over time.  The opposite assumption of ``impatient
bidders'' assumes that a transaction {\em must} be scheduled in the immediate block
or it loses all value for its user.  A more general model would capture some sensitivity of the value of a transaction to the time of its execution, where
the value decreases with time.  The most common model for such
time-dependent value \cite{HLM21, PS24, GY24} would have a discount rate: 
A transaction with value $v$ and {\em discount factor} $\rho$ 
(where $0 \le \rho < 1$) 
has value for its user of 
$v \cdot (1-\rho)^{t_e-t_a}$, where $t_a$ is its arrival time and $t_e$ its execution time.

Thus the fully patient model corresponds to 
discount factor $\rho=0$, while the 
impatient model corresponds to $\rho\rightarrow 1$.
As the EIP-1559 algorithm does not allow its input to specify any 
discount factor, it is not hard to observe that if we just run EIP-1559 
on partially patient bidders then, since it does not distinguish between ``new'' transactions and
old ones that already lost most of their value, it cannot produce highly efficient results.  But perhaps if we just let the algorithm take into account, at every block, the
{\em current} value of the transaction, then we regain efficiency?  
I.e., suppose that at block $t$,
a pending transaction $i$ that has already arrived is considered for allocation if and only if
$v_i \cdot (1-\rho_i)^{t-t_i} \ge p_t$?  
Unfortunately, we get a negative answer and,
in fact, a lower bound for any online algorithm.  
Significantly, this lower bound
holds for a mix of patient bidders with either impatient bidders or partially patient bidders with arbitrarily low discount factors.

\begin{proposition}\label{prop:imp}
    Fix an online algorithm with average block size $B$, 
    slackness $\Delta$,
    and extension $\Gamma$, where 
    $\Delta(T)+\Gamma(T)=o(T)$. 
 For every {maximum} 
 discount rate ${\rho_{max}}>0$ 
    there exist a
    time horizon $T$, and an input sequence where
    every transaction $i$ has a discount rate 
    $\rho_i \in\{0, {\rho_{max}}\} $ 
    and a value in $v_i\in \{1,2\}$, 
 for which    the algorithm loses 
    at least a fraction $\delta_0=1/20 -o(1)$ of welfare relative to the optimal
    schedule with worst-case block size limit $B$.   
\end{proposition}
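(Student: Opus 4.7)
The plan is to construct an explicit adaptive adversary against any online algorithm, mixing patient value-$1$ transactions (flexible low-value filler) with partially patient value-$2$ transactions of discount $\rho_{\min}$ (high-value but decaying). The goal is to create a commitment dilemma: the algorithm must decide during an initial phase how much of its capacity to spend on patients versus reserve for a potential high-value burst, without knowing which future will materialize. I take $B=1$ with unit transaction sizes, set $\rho=\rho_{\min}$ and $L=\lceil(\ln 2)/\rho\rceil$ (the number of blocks over which a fresh value-$2$ PP depreciates to value $1$), and choose a horizon $T=\Theta(1/\rho)$ large enough that $\Delta(T)+\Gamma(T)<\epsilon T$ for a sufficiently small $\epsilon$ (possible since $\Delta+\Gamma=o(T)$).

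In Phase~1 (times $1$ through $T_0=T/2$) the adversary sends patient value-$1$ transactions at a rate slightly exceeding $B$ per block, so that the algorithm cannot schedule all arrivals within its average capacity. In Phase~2 (times $T_0+1$ through $T$) the adversary branches based on the number $x$ of patients the algorithm has scheduled during Phase~1. In Scenario~A no further arrivals come; the algorithm's welfare is then bounded by $x$ plus the Phase~2 window capacity $T/2+\Gamma(T)+\Delta(T)$, which for small $x$ falls short of the OPT welfare of $\approx T$. In Scenario~B a burst of partially patient value-$2$ transactions arrives at time $T_0+1$; here OPT collects a geometric-series bonus of $(1-\ln 2)/\rho=\Theta(1/\rho)$ by scheduling fresh PPs in the first $L$ slots of Phase~2, while the algorithm, if it has already over-committed Phase~1 capacity, cannot fully harvest that bonus before the PP values decay below $1$.

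The adversary adaptively picks whichever scenario is worse for the algorithm's actual $x$. Writing out the welfare ratios $R_A(x)$ and $R_B(x)$ as explicit functions, one sees that for any $x$ at least one of them is bounded away from $1$: small $x$ is punished by A (the missing patient welfare $T-x-o(T)$ cannot be recovered through the sublinear slackness), and large $x$ is punished by B (the algorithm's Phase~2 window cannot accommodate both the $L$ fresh PPs and the carryover of undropped patients that OPT handles by choosing the right interleaving). Tuning the Phase~1 arrival rate, the burst size, and the Phase~1 length $T_0$ makes the minimum of $R_A(x)$ and $R_B(x)$ bounded above by $19/20$, giving the claimed $\delta_0=1/20$.

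The main obstacle is the calibration that makes a ``balanced'' strategy infeasible. If Phase~1 arrivals came at exactly rate $B$, the algorithm could schedule one patient per block and meet OPT in both scenarios using the $o(T)$ slack. The construction must therefore push the arrival rate strictly above $B$ so that some patient transactions are necessarily discarded, and size the Scenario~B burst so that the Phase~2 window capacity $T/2+\Gamma+\Delta$ is strictly insufficient to schedule the $L$ fresh PPs together with all backlogged patients needed to match OPT's schedule. The analysis hinges on the geometric decay of PP values, which bounds the marginal PP welfare by $\Theta(1/\rho)$; by choosing $T$ commensurate with $1/\rho$, this bonus is a constant fraction of OPT's total welfare, and the loss suffered by the algorithm in whichever scenario the adversary selects is a constant fraction independent of $\rho_{\min}$.
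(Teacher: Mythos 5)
Your construction does not work: neither of your two scenarios actually forces a loss, because you have put the decay on the wrong transactions and concentrated their arrival at a single observable moment. In Scenario~A the punishment you claim for small $x$ is illusory: the Phase-1 transactions are fully patient (value $1$, $\rho_i=0$), so any of them left unscheduled at time $T_0$ can simply be scheduled in Phase~2 at full value. The quantity $x$ is irrelevant; the algorithm's welfare in Scenario~A is $\min\{\text{total arrivals},\, T+\Delta+\Gamma\}$, which matches OPT up to $o(T)$ no matter what the algorithm did in Phase~1. In Scenario~B the burst of value-$2$ partially patient transactions arrives at a single time $T_0+1$ that the algorithm observes; since a general online algorithm may react arbitrarily to arrivals, it can immediately devote the first $L$ blocks of Phase~2 to the fresh decaying transactions (harvesting exactly the same geometric bonus as OPT) and push the backlogged patients to later blocks, where they still have full value. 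Nothing done in Phase~1 constrains this, because capacity is a windowed average constraint with $o(T)$ slack, not a consumable budget. So a greedy algorithm that always fills each block with the currently-highest-value pending transactions matches OPT up to $o(T)$ on every input your adversary can produce, and no constant-fraction loss follows.

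The missing idea is to make the \emph{decaying} transactions the ones that arrive spread out over time (one per block throughout Phase~1) and to set up a genuine head-to-head competition for early slots between them and patient transactions of \emph{higher} value. The paper's proof does exactly this: $p=T/3$ patient transactions of value $2$ arrive at block~$1$, and one hasty value-$1$ transaction with discount $\rho_{\min}$ arrives at each of the first $p$ blocks. If the algorithm schedules many hasty ones (at least $p/2$), the adversary floods with $2p$ more patient value-$2$ transactions so that total capacity is exhausted and every slot spent on a value-$1$ hasty transaction displaces a value-$2$ one (ratio at most $11/12+o(1)$). If it schedules few hasty ones, the adversary sends a second wave of fresh decaying transactions occupying the middle third, so the backlogged hasty transactions must wait $p$ blocks and lose at least half their value, with $(1-\rho_{\min})^p\le 1/2$ guaranteed by taking $T=3p$ large enough (ratio at most $19/20+o(1)$). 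Either branch yields the claimed $\delta_0=1/20$. Your write-up would need to be restructured along these lines; the value assignment (patient $=$ high, decaying $=$ low) and the per-block drip of decaying arrivals are both essential to making the commitment dilemma bind.
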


This impossibility result applies to scenarios where different transactions
may have different discount rates.  We do conjecture that the ``modified''
EIP-1559 is near optimal if the discount rate is global and shared by all
transactions.\footnote{Likely suffering an additional loss that behaves like $(1-\rho)^{O(\Delta+\Gamma)}$ {when all transactions have discount rate $\rho$}.}  We do show however, that if we  
replace the model of ``discount factor'' with a model where transactions
have a ``patience level'' where they can be scheduled within $p$ steps
of their arrival time without losing any value, but afterwards lose all value, then again, online algorithms must lose a constant fraction of
welfare, even if all transactions have the same patience level.

\subsubsection{Multi-dimensional Fees}

In our basic model each transaction has a ``single-dimensional'' size $q_i$.  More
generally one may consider a model where there are $m$ different resources, each
block has a size limit $B_j$ for every resource $j$ and each transaction $i$ uses
the amount $q_{ij}$ of each resource $j$ (see e.g. \cite{MDBFM23,ADM24}).  
The introduction of the ``blob'' resource to Ethereum
\cite{4844} is a step in this direction.
The natural generalization of the slackness condition to the multi-dimensional case
would require that every $T$ consecutive blocks use at most $(T+\Delta) \cdot B_j$
amount of each resource $j$.  
While one may hope to extend the near-optimality result to such a multi-dimensional model,
it turns out any online algorithm must lose some constant fraction of 
social
welfare in the multi-dimensional case.

\begin{proposition}
If there are at least three resources then any online algorithm with average block
size $B_j$, slackness $\Delta(T)=o(T)$ 
for each resource $j$ and given an extension $\Gamma(T)=o(T)$,
must lose at least a constant fraction 
$\delta_0=1/6-o(1)$
of social welfare on some input sequence.
\end{proposition}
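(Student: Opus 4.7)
The plan is to construct an adaptive adversarial input on a $3$-resource instance against which any deterministic online algorithm suffers a constant-fraction welfare loss, regardless of its scheduling choices. The construction exploits the multidimensional structure: three resources allow for a ``triangle'' of pair-type transactions that OPT can interleave to fully utilize all resources per block, but that an online algorithm cannot achieve without foresight.

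Concretely, I would use three resources with target block size $B_j = B$ (taking $B = 2$ for the analysis) and three pair-type transactions $\pi_{12}, \pi_{13}, \pi_{23}$, where $\pi_{ij}$ has unit size in resources $i$ and $j$, zero in the third, and value $1$. The key structural observation is that a block can simultaneously fit one of each type (using capacity $(2,2,2)$), so OPT achieves per-block density of $3$ transactions when all three types are available. In contrast, restricting to a single pair type $\pi_{ij}$ caps a block at $2$ transactions, since the third resource is wasted and the two used resources become the binding constraint.

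The adversary proceeds in two phases. Phase~1 at time~$1$ releases a batch of $N = \Theta(T)$ copies of $\pi_{12}$. Phase~2 is adaptive: depending on the algorithm's observed scheduling commitments, the adversary either (a) stops and sends no more transactions, or (b) reveals large batches of $\pi_{13}$ and $\pi_{23}$ at a chosen time $t^*$. I would parametrize any online algorithm by the number $k$ of $\pi_{12}$'s it has scheduled by time $t^*$ and show that for every value of $k$, one of the two continuations forces algorithm welfare at most $5/6$ of the clairvoyant OPT. Intuitively, when $k$ is large, continuation~(b) prevents mixed packings in the remaining blocks (the algorithm has consumed its $\pi_{12}$ supply and can fit only $2$ transactions per subsequent block involving $\pi_{13}$ and $\pi_{23}$, while OPT would have saved $\pi_{12}$ for mixing); when $k$ is small, continuation~(a) leaves the algorithm with leftover $\pi_{12}$'s but no blocks (or no partners) to mix with efficiently, so many unused blocks or unscheduled transactions remain.

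The main obstacle is the tight quantitative case analysis that balances the two adversarial continuations to extract exactly the $1/6$ constant. The algorithm's best response to the minimax game should fall short of OPT by a constant fraction regardless of strategy, and I need to verify that neither the $o(T)$ slackness $\Delta$ nor the $o(T)$ extension $\Gamma$ can amortize this loss. The latter holds because the per-instance welfare loss is linear in $T$, while $\Delta$ and $\Gamma$ contribute only sublinear additive corrections to the ratio; hence the loss fraction converges to at least $1/6$ as $T$ grows. The argument extends to any $m \ge 3$ resources, since adding resources only provides the adversary with more dimensions to exploit.
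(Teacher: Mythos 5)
Your construction does not actually force a loss, so there is a genuine gap. The trouble is that $\pi_{12}$ and the pair $\{\pi_{13},\pi_{23}\}$ have complementary resource footprints (with $B=2$, two $\pi_{12}$'s use $(2,2,0)$ while a $\pi_{13},\pi_{23}$ pair uses $(1,1,2)$), so they never contend for a bottleneck: each $\pi_{12}$ contributes value $1$ whether it is packed two-per-block early or used as the third item of a mixed block later. Concretely, take $N$ copies of $\pi_{12}$ at time $1$ and $M$ copies each of $\pi_{13},\pi_{23}$ at time $t^*$. The algorithm that greedily schedules $\pi_{12}$'s at two per block and then $\pi_{13}+\pi_{23}$ pairs afterwards gets $\min(N,2t^*)+2\min(M,T-t^*)$, which equals what the clairvoyant schedule gets (the ``$+1$ per mixed block'' bonus of OPT exactly cancels against the $\pi_{12}$'s it withheld from phase~1, and resource~3 caps the $\pi_{13},\pi_{23}$ count at $2$ per block either way). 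Your continuation~(a) also cannot hurt: transactions are fully patient and there is no deadline, so leftover $\pi_{12}$'s are simply scheduled two per block in later empty blocks; ``unused blocks'' in the past cost nothing when the leftover supply still fits in the future. Since neither branch of your adaptive adversary creates a deficit, no case analysis will extract a constant.

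The missing ingredient is a \emph{forced choice under a shared bottleneck} in phase~1. The paper releases, at time~$1$, $t=T/2$ units of demand for $\{X,Z\}$ and $t$ units for $\{Y,Z\}$: both need resource $Z$, of which only $t+\Delta$ is available in the first $t$ blocks, so the algorithm must leave at least $(t-\Delta)/2$ units of one type (say $\{X,Z\}$) unscheduled. The adversary then adaptively floods the \emph{private} resource of the starved type, releasing $t$ units of the singleton $\{X\}$; now the leftover $\{X,Z\}$ demand and the new $\{X\}$ demand together need $\ge 3t/2-O(\Delta)$ units of $X$ while only $t+\Delta+\Gamma$ remain, so $t/2-o(t)$ of value is irrecoverably lost. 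OPT avoids this by exhausting $\{X,Z\}$ first and then packing $\{Y,Z\}$ with $\{X\}$ (footprints $(0,1,1)$ and $(1,0,0)$ fit together perfectly), achieving $3t$ versus the algorithm's $5t/2+o(t)$, i.e.\ a $1/6$ loss. If you want to repair your proof, you need phase~1 to present two types competing for one resource so that delaying some demand is unavoidable, and phase~2 to congest whichever complementary resource the algorithm's delayed demand still requires.
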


{After the initial version of this paper was
published, \cite{BeN25} extended the lower bound also for the case of two resources, {and additionally,} 
gave a constant factor approximation for the general case.}

\subsection{Strategic Points of View}

Our analysis in this paper is purely algorithmic, so our results can be viewed as proving near-optimality in terms of the \emph{declared} social welfare: the welfare with respect to the values given as 
input by the bidders to the algorithm (rather than the true values). 
Regarding the operators (miners, validators), our
analysis assumed adversarial behavior, 
subject to the weak constraint that they must
schedule a maximal by inclusion subset of admissible transactions.  
While we do not formally make any strategic analysis,
we do want to shortly mention some strategic points of view.

From the point of view of the bidders, the known
incentive compatibility of EIP-1559 is only for impatient bidders\footnote{Even for impatient bidders, 
the incentive compatibility of EIP-1559
does not hold in the case of a rapid demand increase as then tips kick in and winners pay the tips they offer.}, while patient bidders may
in fact shade their bids profitably.  So can we expect
truthful bidding from patient bidders?  Well, practically, bidders
may simply act myopically -- in our case, truthfully -- due to the complexity of predicting the future, and their bounded rationality.
In our case, such behavior is also justified by risk aversion:
Bidders that are sufficiently 
risk-averse with respect to the future
will bid truthfully, taking the first available slot at a profit rather
than risk never being scheduled due to possible future increase in demand.
An appealing strategic analysis of patient bidders' behavior that
includes a reasonable model
of future uncertainty is of course 
of much interest, and is left as a future challenge.\footnote{
It is
interesting to note that if in equilibrium the declared values 
are monotone in 
the true values -- as may be expected in many models -- 
then optimizing for the former is closely related to optimizing for the latter. In that case, our algorithmic results may carry over to the
strategic setting in the model in question.
A precise statement to such a result would obviously depend on many details of the model.} 

From the point of view of the operators  
that assemble the blocks, our analysis undertook an adversarial point of
view. Thus, it should hold under a very general class of strategic behaviors of the operators, including any reasonable model of myopic operators
as studied so far in the literature.  
The only required assumption is that
scheduling a set of transactions is not less beneficial
to the operator than scheduling only a subset of them.

\section{Model and Notations}\label{sec:model}

\subsection{Blocks, Schedules, and Social Welfare}

We consider transactions that are processed in blocks. At each time  $t\in \{1,2,\ldots \}$ 
one block is processed. 
We consider 
{
a set $\trans$ of transactions that arrive online over time,
at each time $t\in \mathbb{N}$ a finite set of transactions arrives.
Each \emph{transaction} $i\in \trans$ has an arrival time $t_i\in \mathbb{N}$, size $q_i>0$, 
and value per-unit size of $v_i>0$ (thus its total value if executed is $q_i \cdot v_i$). 
Transactions are patient, i.e. may be scheduled at time $t_i$ or at any time afterwards, without loss in value.}
We use $q^T_{max}=\max_{i\in \trans} q_i$ to denote the maximal size of any transaction arriving up to time $T$, and use 
$v^T_{min}=\min_{i\in \trans} v_i$ and
$v^T_{max}=\max_{i\in \trans} v_i$ to denote the minimal and maximal per-unit value of any transaction arriving up to time $T$, respectively.
When $T$ is clear from the context we drop it from these notations and write $q_{max}, v_{min}$ and $v_{max}$ (instead of $q^T_{max}, v^T_{min}$ and $v^T_{max}$).

\begin{definition}
Given a set of transactions $\trans$ (where finitely many arrive at each time step) 
a \emph{schedule} $S$ is an assignment of {transactions} to times,  
such that {a transaction cannot be assigned before it arrives.} 
Formally, for each $i\in \trans$ and $t\in \mathbb{N}$, a \emph{schedule} $S=\{x_{i,t}\}_{i\in \trans, t\in \mathbb{N}}$ {where} 
$x_{i,t}\in \{0,1\}$ specifies, for each $i\in \trans, t\in \mathbb{N}$, whether transaction $i$ was scheduled at time $t$, under the constraints that for every transaction $i\in\trans$:
\begin{itemize}
    \item the transaction is not scheduled before it arrives: $t<t_i$ implies that $x_{i,t}=0$. 
    \item the transaction is assigned no more than one slot: 
$\sum_{t\in \mathbb{N}} x_{i,t}\leq 1$.
\end{itemize}  
\end{definition}

\begin{definition}
A \emph{fractional schedule} $S=\{x_{i,t}\}_{i\in \trans, t\in \mathbb{N}}$ is a schedule satisfying the exact same two constraints, 
but 
relaxing the integrality requirement on $x_{i,t}$, allowing for any fraction $x_{i,t}\in [0,1]$.
\end{definition}

{
\begin{definition}
    We use $Q_t(S)=\sum_{i\in \trans} x_{i,t} \cdot q_i$ to denote the total capacity used by the schedule $S$ at time $t$. When the schedule $S$ is clear from the context we
    may omit it in the notation and denote $Q_t= Q_t(S)$.  We say that a schedule has
    \emph{maximum block size $B$} if for all $t \ge 1$ we have that $Q_t \le B$.  
\end{definition}

}
\begin{definition}
We denote the set of transactions in $\trans$ with per-unit value at least $\theta$ by $\trans(\theta)$.
For a schedule $S=\{x_{i,t}\}_{i\in \trans, t\in \mathbb{N}}$ we use $Q_t(S,\theta)= \sum_{i\in \trans(\theta)} x_{i,t} \cdot q_i$ to denote the total capacity of transactions with per-unit  value at least $\theta$ that are scheduled at time $t$. We use $Q_{[1,T]}(S,\theta)=\sum_{t=1}^T Q_t(S,\theta)$ to denote the total capacity of transactions with per-unit value at least $\theta$ that are scheduled at times $1,2,\ldots,T$.
\end{definition}

Note that for brevity we have omitted the implied underlying set of transactions $\trans$ from the notations of $Q_t(S,\theta)$ and $Q_t(S)$. {We do the same in the following definition of social welfare.} 

\begin{definition}
The \emph{social welfare} of a schedule $S$ of transactions in $\trans$ up to time $T$, is the sum of the values of the transactions that were scheduled up to $T$: $SW(S, [1,T])=\sum_{i\in \trans,t\in [T]} x_{i,t} \cdot q_i\cdot v_i$.

\end{definition}

The social welfare of a  schedule may be easily expressed using the quantity parameters:

\begin{lemma} \label{lem:intQ} For any $T$ and any schedule $S$ it holds that 
    $$SW(S,[1,T]) = \int_0^\infty Q_{[1,T]}(S,\theta) d\theta$$
\end{lemma}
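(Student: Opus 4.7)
The plan is to prove the identity by a standard layer-cake (a.k.a.\ ``horizontal slicing'') argument, rewriting each per-unit value $v_i$ as an integral of an indicator and swapping sum with integral.

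First, I would observe the elementary identity $v_i = \int_0^{v_i} 1\,d\theta = \int_0^\infty \mathbf{1}[v_i \geq \theta]\,d\theta$, which holds for every positive $v_i$. Multiplying by $q_i$ and substituting into the definition of social welfare gives
\[
SW(S,[1,T]) \;=\; \sum_{i\in\trans}\sum_{t=1}^T x_{i,t}\,q_i\,v_i \;=\; \sum_{i\in\trans}\sum_{t=1}^T x_{i,t}\,q_i \int_0^\infty \mathbf{1}[v_i \geq \theta]\,d\theta.
\]

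Second, I would swap the order of summation and integration. This is justified because every summand is non-negative and, for each fixed $T$, only finitely many transactions $i$ have $t_i \leq T$ (since the model assumes finitely many arrivals at each time step), so only finitely many $(i,t)$ pairs contribute a non-zero term; thus the swap is just a rearrangement of a finite non-negative sum of integrals. After the swap the integrand becomes
\[
\sum_{t=1}^T \sum_{i\in\trans} x_{i,t}\,q_i\,\mathbf{1}[v_i \geq \theta]
\;=\; \sum_{t=1}^T \sum_{i\in\trans(\theta)} x_{i,t}\,q_i
\;=\; \sum_{t=1}^T Q_t(S,\theta) \;=\; Q_{[1,T]}(S,\theta),
\]
where the first equality uses the definition of $\trans(\theta)$ and the remaining two are the definitions of $Q_t(S,\theta)$ and $Q_{[1,T]}(S,\theta)$.

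Combining these steps yields the claimed identity. There is no real obstacle here: the only care needed is the justification of the Fubini-style swap, which is immediate from non-negativity and the finiteness of the support of the double sum up to time $T$.
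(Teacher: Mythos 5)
Your proof is correct and is essentially the same as the paper's: both rely on the layer-cake identity $v_i = \int_0^\infty \mathbf{1}[v_i \ge \theta]\,d\theta$ and a sum--integral swap, the only difference being that you start from $SW$ and the paper starts from $Q_{[1,T]}(S,\theta)$. Your explicit justification of the swap (finitely many contributing terms, non-negativity) is a small bonus that the paper leaves implicit.
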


\begin{proof}
    By definition $Q_{[1,T]}(S,\theta)= \sum_{i\in \trans(\theta), t\in [T]} x_{i,t} \cdot q_i = \sum_{i\in \trans, t\in [T]} x_{i,t} \cdot q_i \cdot \mathbbm{1}_{v_i \ge \theta}$ 
    (notice that the summation over $i$ in the first summation is only over transactions with per-unit value of at least $\theta$, while in the second summation it is over all transactions).
    Integrating by $\theta$ we have $\int_0^\infty Q_{[1,T]}(S,\theta) d\theta =
    \sum_{i\in \trans, t\in [T]} x_{i,t} \cdot q_i \cdot \int_0^\infty \mathbbm{1}_{v_i \ge \theta} (\theta) d \theta = 
    \sum_{i\in \trans, t\in [T]} x_{i,t} \cdot q_i \cdot v_i$.  Now notice that this is exactly
    the definition of $SW(S,[1,T])$.
\end{proof}

\begin{definition} 
A schedule is said to have \emph{average block size limit $B$ with slackness $\Delta:\mathbb{N}\rightarrow \mathbb{N}$} if
for any number of time steps $k$, the total size of transactions during any $k$ consecutive blocks 
is at most $(k+\Delta(k)) \cdot B$.  
I.e. for any $t_0 \le t_1$ we have that  $\sum_{t=t_0}^{t_1} Q_t(S) \le (t_1-t_0+1+\Delta(t_1-t_0+1)) \cdot B$.
{When $\Delta$ is a constant function (for constant $Z$ it holds that $\Delta(k)=Z$), we abuse notation and simply write $\Delta=Z$.}
\end{definition}

\subsection{Online Scheduling Algorithms}

In this paper we consider \emph{online} scheduling algorithms. 
{While (offline) scheduling algorithms map all information about the transactions to a schedule, online algorithms are constrained to} 
assemble transactions into a sequence of blocks using only information about transactions arriving so far.

\begin{definition}
    A scheduling algorithm  is called \emph{online}  if its allocation at any time $t$ is determined only as a function of the information known by time $t$ (inclusive), i.e.
    on information from {every  transaction} $i$ with $t_i \le t$.
\end{definition}
    {We say that a scheduling algorithm has \emph{maximum block size $B$} if on every input it produces a schedule with maximum block size $B$.}

{It is perhaps useful to present the EIP-1559 
algorithm by first
defining a  sub-class of online
algorithms, termed ``price-based''\footnote{This class is conceptually
similar to the 
class considered in
\cite{FMPM21} but we take an adversarial
abstraction of the tip mechanism rather than their randomized choice.  This whole class shares EIP-1559's incentive properties in the impatient case.},  which the \EIP algorithm is a member of.
}

\begin{definition}\label{def:price-based}
    An online scheduling algorithm  is called \emph{price-based}, if at every time $t$ it sets a per-unit  price $p_t$ and maximum size $B_t$ based only on the parameters of the transactions that were executed at previous times (i.e. every 
    transaction $i$ such that $x_{it'}=1$ for $t'<t$).  The transactions that are scheduled at time $t$ are chosen as follows: look at all yet unscheduled transactions $i$ that have already arrived by time $t$ ($t_i \le t$) and are willing to pay $p_t$ per unit ($v_i \ge p_t$).  From these, an {\em adversary} chooses a maximal (by inclusion) subset of transactions of total size at most $B_t$.
\end{definition}

{Thus price-based scheduling algorithms
are only allowed to use ``on chain'' information (i.e. 
from transactions that were scheduled before time $t$) and their
allocation decisions must be simply by price per unit.
This is in contrast to the more general class of {\em online}
algorithms that may take decisions based on all information known by decision time, and use that
information in an arbitrary way.}

We next define a family of ``\EIP algorithms" - these are price-based algorithms that are parameterized by several parameters: 1) target block size $B>0$, 2) maximum block size $c\cdot B$ for some constant $c> 1$, 3) a price-update parameter $\eta>0$, 4) minimum price $p_{min}\geq 0$, and 5) an initial price $p_1$ for the first round, satisfying $p_1\geq p_{min}$. 

\begin{definition}
   The \emph{\EIP algorithm} with parameters $(B, c, \eta, p_{min}, p_1)$
   {such that $B>0,  c> 1, \eta>0, p_1\geq p_{min}>0$,}
   is a price-based algorithm  where $B_t=c\cdot B$ for every time $t$ {(so it has maximal block size $c\cdot B$)}, 
   and  where the per-unit price $p_t$  
   is computed iteratively as follows:
   {the price at time $t=1$ is $p_1$, and for any $t>1$ it holds that} {$p_{t+1} = \max \left\{p_{min}, p_t\cdot e^{\eta\cdot (Q_t-B)/B}\right\}$}, where $Q_t$ is the total size of the transactions scheduled at time $t$.
\end{definition}

Observe that the price goes up when $Q_t>B$, and it goes down when $Q_t<B$. The step size $\eta$ controls the rate in which the price is updated.

The actual EIP-1559 algorithm {with Ethereum's choice of parameters} is such an algorithm with target block size {18} 
Mega gas, maximum block size twice as large ($c=2$), price-update parameter $\eta=1/8$, $p_1 = 10^{-9}$ ETH and $p_{min}=10^{-18}$ ETH.\footnote{
The minimum price is implicit since all price calculations are done in integer 
multiples of Wei ($=10^{-18}$ ETH) and the first block's price, when moving to EIP-1559 was
set to one Gwei ($=10^{-9}$ ETH).}
We also remark that Ethereum's EIP-1559 algorithm approximates the price update multiplier $ e^{\eta\cdot (Q_t-B)/B}$ by  $1+{\eta\cdot (Q_t-B)/B}$ (using the first term of the Taylor expansion, $e^{x}\approx 1+x$).  Finally, when there is more demand at the current block's price than the maximum block size, then in Ethereum's EIP-1559 algorithm, the transactions taken are chosen according to their offered ``tip''
rather than adversarially.  From an  algorithmic (non-strategic) point of view, as we take here, since the tips may be unrelated to the actual values of the transactions, 
they {may}
indeed result in any adversarial schedule. 

{It is generally known that the \EIP algorithm yields blocks whose {\em average} size
approaches the target size $B$.  We present the formal statement and proof in our setting and notation.}  

\begin{proposition}\label{prop:EIP-slack}
The \EIP algorithm with parameters $(B, c, \eta, p_{min},p_1)$ 
has average block size limit $B$ with slackness $\Delta = \frac{1}{\eta}\cdot \ln \frac{v_{max}}{p_{min}}+ {(c-1)}$.
\end{proposition}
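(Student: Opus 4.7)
The plan is to analyze the multiplicative price dynamics through its logarithm and then telescope. From the update rule we have $p_{t+1} \ge p_t \cdot e^{\eta(Q_t - B)/B}$ (dropping the $p_{min}$ floor only weakens this direction), so $\ln p_{t+1} - \ln p_t \ge \eta(Q_t - B)/B$. For any window of $k$ consecutive blocks $[t_0, t_1]$, summing this inequality gives
$$\sum_{t=t_0}^{t_1} Q_t \;\le\; kB + \frac{B}{\eta}\bigl(\ln p_{t_1+1} - \ln p_{t_0}\bigr).$$
It therefore suffices to show that $\ln p_{t_1+1} - \ln p_{t_0} \le \ln(v_{max}/p_{min}) + \eta(c-1)$, which then matches $\Delta \cdot B$ after rescaling.

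The lower bound $p_{t_0} \ge p_{min}$ is immediate from the $\max$ in the update rule. For the upper bound on $p_{t_1+1}$, I would prove by induction on $t$ that $p_t \le v_{max} \cdot e^{\eta(c-1)}$ for every $t$ (assuming $p_1 \le v_{max}$, which is WLOG: otherwise one simply absorbs $p_1$ into the upper bound, as in the intro-level version stated with a common upper bound $H \ge p_1$). The inductive step splits into two cases. If $p_t \le v_{max}$, then even a maximally full block of size $cB$ multiplies $p_t$ by at most $e^{\eta(c-1)}$, so $p_{t+1} \le v_{max} \cdot e^{\eta(c-1)}$. If instead $v_{max} < p_t \le v_{max} \cdot e^{\eta(c-1)}$, then no transaction satisfies $v_i \ge p_t$, so $Q_t = 0$ and $p_{t+1} \le \max\{p_{min},\, p_t \cdot e^{-\eta}\} \le p_t$, preserving the invariant.

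Putting the two bounds together gives
$$\frac{B}{\eta}\bigl(\ln p_{t_1+1} - \ln p_{t_0}\bigr) \;\le\; \frac{B}{\eta}\ln\!\frac{v_{max}}{p_{min}} + B(c-1) \;=\; \Delta \cdot B,$$
so $\sum_{t=t_0}^{t_1} Q_t \le (k + \Delta) B$ for every window, which is exactly the definition of average block size limit $B$ with slackness $\Delta$. The only subtlety to be careful about is the upper bound on $p_t$: the key observation is that the regime $p_t > v_{max}$ is self-correcting because every block in it is empty and the price strictly decays, which rules out unbounded drift and makes the inductive invariant go through.
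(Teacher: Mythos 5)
Your proof is correct and follows essentially the same route as the paper's: telescope the multiplicative update over the window, lower-bound the starting price by $p_{min}$, and upper-bound the ending price by $v_{max}\cdot e^{\eta(c-1)}$ via an induction whose key observation is that $p_t > v_{max}$ forces $Q_t = 0$ and hence a price decrease. The one point where the paper is slightly more careful is the case $p_1 > v_{max}$ (which you dismiss as WLOG by absorbing $p_1$ into the upper bound): instead, the paper restarts the telescoping sum at the first block $i'$ of the window with $Q_{i'}>0$ --- which costs nothing since the preceding blocks are empty and guarantees the price there is at most $v_{max}$ --- so the stated slackness in terms of $v_{max}$ alone holds with no assumption on $p_1$.
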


\begin{proof}
   Fix any $k\in \mathbb{N}$, and consider a sequence of $k$ consecutive blocks $[i,j]$ of length $k\geq 1$ (i.e. $j-i=k-1$). If  $\sum_{t=i}^j Q_t=0$ the claim trivially holds. Otherwise, let $i'\geq i$ be the lowest index such that $Q_{i'}>0$. Note that $i'\leq j$ and that
   $\sum_{t=i}^j Q_t=\sum_{t=i'}^j Q_t$.
   
   Since the prices are updated multiplicatively, we have that  
    \begin{equation*}
    \frac{p_{j+1}}{p_{i'}} =\prod_{t=i'}^{j} \frac{p_{t+1}}{p_t} = 
    \prod_{t=i'}^{j} \max \left\{\frac{p_{min}}{p_t}, e^{\eta\cdot (Q_t-B)/B}\right\}
    \geq e^{\eta\cdot (\sum_{t=i'}^j Q_t-k\cdot B)/B}
   \end{equation*}
   It holds that $p_{i'}\geq p_{min}$. Additionally, $p_{j+1}\leq  v_{max} \cdot e^{\eta\cdot (c-1)}$ by the following lemma and the fact that $Q_{i'}>0$ for $i'\leq j< j+1$:
   \begin{claim}
       If for time $t$ it holds that $Q_t>0$ then for any $k>t$ it holds that  
       $p_{k}\leq  v_{max} \cdot e^{\eta\cdot (c-1)}$.       
   \end{claim}
   \begin{proof}
       We prove the claim by induction. Assume the claim was true up to time $k-1\geq t$, so $p_{k-1}\leq  v_{max} \cdot e^{\eta\cdot (c-1)}$. If $p_k\leq p_{k-1}$ then $p_k\leq p_{k-1}\leq  v_{max} \cdot e^{\eta\cdot (c-1)}$. Otherwise  $p_k> p_{k-1}$, implying that demand at price $p_{k-1}$ was positive, so $p_{k-1}\leq v_{max}$. In that case, the  price can only rise above $v_{max}$ by a factor of at most $e^{\eta\cdot (Q_{k-1}-B)/B}\leq e^{\eta\cdot (c\cdot B-B)/B}= e^{\eta\cdot (c-1)}$ 
   since $Q_{k-1}\leq c\cdot B$. 
   \end{proof}
   
   We conclude that $\frac{p_{j+1}}{p_{i'}} \le \frac{v_{max}\cdot e^{\eta\cdot (c-1)}}{p_{min}}$. Combining the two inequalities we get 
   \begin{equation*}
    \frac{v_{max}\cdot e^{\eta\cdot (c-1)}}{p_{min}}\geq \frac{p_{j+1}}{p_{i'}} 
    \geq e^{\eta\cdot (\sum_{t=i'}^j Q_t-(j-i'+1)\cdot B)/B}
   \end{equation*}
   taking natural logs 
   \begin{equation*}
    \ln \frac{v_{max}}{p_{min}}+ {\eta\cdot (c-1)} 
    \geq {\frac{\eta}{B}\cdot \left(\sum_{t=i'}^j Q_t-(j-i'+1)\cdot B\right)}
   \end{equation*}
   thus
   \begin{equation*}
    B\left((j-i'+1)+  \frac{1}{\eta}\cdot \ln \frac{v_{max}}{p_{min}}+ {(c-1)}\right) 
    \geq \sum_{t=i'}^j Q_t = \sum_{t=i}^j Q_t
   \end{equation*}
   As $i'\geq i$  we have  $j-i'+1\leq j-i+1 = k$  and thus $B\cdot k \geq B\cdot (j-i'+1)$. 
   Thus, the  \EIP algorithm with parameters $(B, c, \eta, p_{min},p_1)$  has average block size limit $B$ with slackness 
    $\Delta = \frac{1}{\eta}\cdot \ln \frac{v_{max}}{p_{min}}+ {(c-1)}$.
\end{proof}

\section{\EIP has High Welfare for Patient Bidders}\label{sec:main}

In this section we prove our main result, showing that when bidders are patient, the \EIP algorithm produces schedules of near-optimal welfare, provided  it is given a mild resource augmentation (that does not increase with the time horizon).  {As a warm up we first analyze a Greedy Online Algorithm (\cref{sec:greedy}), illustrating some of the  techniques we use in  proving our main result. We then present our main result (\cref{sec:main-result}); its proof is based on a central lemma which we prove in \cref{sec:main-lemma}}.

\subsection{Warm up: a Greedy Online Algorithm} \label{sec:greedy}

Before diving into the main theorem, we can get some intuition by proving that a simple
online greedy algorithm with slightly relaxed constraints, competes with the optimum.  
Specifically, we show that a simple online greedy algorithm with $c=2$, slackness of $1$ and extension of $1$, has welfare at least as high as the optimum.
This algorithm can be viewed as setting prices at each time, but these prices are allowed to depend on all input arriving so far.
To some extent it is possible to view the EIP-1559 algorithm as attempting to mimic this online algorithm
by gradually approximating its prices.  This gradual approximation will take extra steps, require bounds
on the per-unit valuations, and incur some losses which the
simple greedy algorithm does not suffer from (as, unlike the EIP-1559 algorithm, it has full access to all parameters of the transactions that have already arrived).

Consider any time horizon $T$ (that need not be known by the algorithm). 
For any set of arriving transactions, we compare the welfare of the algorithm to the welfare of the optimal schedule over $T$ steps, where the optimal schedule is limited to have maximal block size $B$ at each time step (and thus we can assume that there are only transactions of size at most $B$).
At each time $t=1,2, \ldots,T$ our algorithm will greedily schedule pending transactions with the highest per-unit value, until the first point in which total size of the transactions  
scheduled by time $t$ is at least $t \cdot B$ (the case there are not enough pending transactions is an edge case that we may ignore, by pretending to have an unlimited amount of 0 value transactions). As transactions are of size at most $B$, we have that $t\cdot B \le \sum_{i=1}^t q_t< (t+1)\cdot B$, where $q_t$ is the total
size of transactions scheduled for block $t$.

Let us denote  by $p_t$ the value (per unit) of the last (lowest value per unit)  transaction scheduled at time $t$.  

We first observe that for any sequence of $Z$ blocks we have:
$(Z-1)\cdot B \le \sum_{i=t}^{t+Z-1} q_t < (Z+1)\cdot B$, {and thus} this algorithm has average 
block size $B$ with slackness 1.  
This also implies that it has maximum 
block size $2\cdot B$.  

We next observe that when given an extension of one extra block, this greedy online algorithm always achieves social welfare {at least} as high as the optimum schedule of maximal block size $B$.

We denote {the schedule produced by this algorithm} by $ALG$ and the optimal {(welfare-maximizing up to time $T$)} fractional schedule {of maximal block size $B$} by $OPT$.

\begin{proposition}\label{warmup} For any $T$ it holds that 
    $SW(ALG, [1,T+1]) \ge SW(OPT, [1,T])$.
\end{proposition}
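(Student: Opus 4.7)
The plan is to apply \cref{lem:intQ}, which writes $SW(S,[1,T]) = \int_0^\infty Q_{[1,T]}(S,\theta)\,d\theta$, reducing the welfare comparison to a pointwise comparison of the quantity-above-threshold: it suffices to show
$$Q_{[1,T+1]}(ALG,\theta) \;\ge\; Q_{[1,T]}(OPT,\theta) \qquad \text{for every } \theta \ge 0,$$
after which integrating in $\theta$ yields $SW(ALG,[1,T+1]) \ge SW(OPT,[1,T])$.

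Fix $\theta$. Let $C_t = \sum_{s=1}^t Q_s(ALG)$ denote ALG's cumulative scheduled size through block $t$; by the greedy construction (using the dummy-$0$-value-transactions convention from the warmup) one has $tB \le C_t < (t+1)B$. The structural property driving the proof is a greedy observation: \emph{at any block $t$ where ALG schedules some transaction of per-unit value strictly below $\theta$, every value-$\ge \theta$ transaction that has arrived by time $t$ is already in ALG's schedule by time $t$}. This is immediate from the fact that greedy picks from the pending pool in decreasing per-unit value order, so dipping below $\theta$ forces the pending value-$\ge\theta$ pool to be empty.

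I then split into two cases based on whether ALG ever dips below $\theta$ within $[1,T+1]$. If it never does, every ALG-scheduled transaction in these blocks has value $\ge \theta$, so
$$Q_{[1,T+1]}(ALG,\theta) = C_{T+1} \ge (T+1)B > T \cdot B \ge Q_{[1,T]}(OPT,\theta),$$
using OPT's maximum block size $B$. Otherwise, let $t^*$ be the latest block in $[1,T+1]$ at which ALG schedules a value-$<\theta$ transaction, and let $N^*$ denote the total size of value-$\ge \theta$ arrivals by time $t^*$. The greedy observation applied at $t^*$ says all of those $N^*$ worth of transactions are already in ALG's schedule by time $t^*$; the maximality of $t^*$ says every transaction ALG schedules in blocks $t^*+1,\dots,T+1$ has value $\ge \theta$, contributing a further $C_{T+1}-C_{t^*} > (T-t^*)B$ by the greedy invariant. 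Hence $Q_{[1,T+1]}(ALG,\theta) \ge N^* + (T-t^*)B$. On the OPT side, splitting $Q_{[1,T]}(OPT,\theta)$ at $t^*$: the portion from blocks $[1,t^*]$ is at most $N^*$ by availability (OPT can only schedule already-arrived transactions), and the portion from $(t^*,T]$ is at most $(T-t^*)B$ by OPT's max block size. Summing gives exactly the matching upper bound on OPT.

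The conceptual heart of the argument is the greedy observation linking ``ALG dips below $\theta$'' to ``all value-$\ge \theta$ arrivals so far are scheduled''; this is what allows the one-block extension to fully compensate for both the one-block slackness and the online algorithm's lack of foresight. The main bookkeeping care is to invoke the invariant $tB \le C_t < (t+1)B$ at both $t^*$ and $T+1$ to get the strict $C_{T+1} - C_{t^*} > (T-t^*)B$ that exactly matches OPT's max-block-size bound term-by-term, so that the two contributions ($N^*$ from the prefix and $(T-t^*)B$ from the suffix) align cleanly on both sides.
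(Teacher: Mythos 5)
Your proof is correct and follows essentially the same route as the paper's: reduce to the pointwise comparison of $Q_{[1,T+1]}(ALG,\theta)$ vs.\ $Q_{[1,T]}(OPT,\theta)$ via \cref{lem:intQ}, split time at the last block where greedy schedules below value $\theta$ (your $t^*$ is the paper's $\alpha_\theta$), bound the prefix by total arrivals of value $\ge\theta$ and the suffix by the cumulative-size invariant. The differences (explicit cumulative sums $C_t$ in place of the ``$Z$ consecutive blocks schedule $\ge (Z-1)B$'' phrasing, and a separate case for ``never dips below $\theta$'') are purely presentational.
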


\begin{proof}

In order to analyze its performance, we make the following claim.  Fix an arbitrary positive number $\theta$ and denote by $Q_{[1,t]}(ALG,\theta)$  the total size of transactions whose value (per unit)
is at least $\theta$ that were scheduled by the algorithm up to (and including) time 
$t$.  Similarly, denote by $Q_{[1,t]}(OPT,\theta)$ the total size of transactions whose value (per unit) is at least $\theta$ 
that were scheduled by the optimal (even fractional) schedule
that has maximum block size $B$ up to (and including) time $t$.

\begin{lemma}
    For every $t$ and $\theta$ we have that $Q_{[1,t+1]}(ALG,\theta) \ge Q_{[1,t]}(OPT,\theta)$.
\end{lemma}

\begin{proof}
The proof of this lemma proceeds by splitting the algorithm's run into two phases.  Let $\alpha_\theta$ 
be the last (largest) time $t$ such that $p_t < \theta$ (If this never happens, then 
we take $\alpha_\theta=0$).  
By the end of the first phase,
i.e. at time $\alpha_\theta$,
there are {\em no} pending transactions 
whose value per unit is at least $\theta$ (since these should have been scheduled before the transaction
whose value per unit is $p_{\alpha_\theta}$). Thus, during the first phase, 
the total size of   transactions with value at least $\theta$ that ALG schedules is at least as high as the total scheduled by OPT, that is, $Q_{[1,\alpha_\theta]}(ALG,\theta) \ge Q_{[1,\alpha_\theta]}(OPT,\theta)$. If $\alpha_\theta\geq T$, the claim follows. Otherwise, $\alpha_\theta<T$.
Now we look at the second phase, i.e. during times steps $\alpha_\theta+1,\ldots,T, T+1$.  
For any time $t$ in this time range it holds that $p_t \ge \theta$, and thus every transaction
scheduled by the algorithm in that range has value per unit of at least $\theta$.
Since, as mentioned above, in every $Z$ consecutive blocks the algorithm schedules transactions of 
size at least $(Z-1)\cdot B$, we have that during the second stage 
(including time $T+1$) our algorithm scheduled transactions of 
size at least 
$((T+1)-\alpha_\theta-1)\cdot B$, 
while $OPT$ can schedule transactions of 
size at most
$(T-\alpha_\theta)\cdot B$ 
during the range $\alpha_\theta+1, \ldots, T-1, T$ (i.e without the 
extra step at time $(T+1)$). 
It follows that in the second phase,  the total size of   transactions with value at least $\theta$ that ALG schedules (with the extra time step)  is at least as high as the total scheduled by OPT, that is, $Q_{[\alpha_\theta+1, T+1]}(ALG,\theta) \ge (T-\alpha_\theta)\cdot B \ge Q_{[\alpha_\theta+1,T]}(OPT,\theta)$. 
\end{proof}

The proposition can be directly deduced from this lemma using \cref{lem:intQ}
that states that the social welfare achieved by $ALG$ for time $1$ till $t+1$  
can be expressed as $\int_0^\infty Q_{[1,t+1]}(ALG,\theta) d\theta$,  and
similarly, the  social welfare achieved by $OPT$ for time $1$ till $t$  
can be expressed as $\int_0^\infty Q_{[1,t]}(OPT,\theta) d\theta$.
\end{proof}

{\bf Remark:} This analysis did not assume any bound on the maximum transaction size beyond being bounded by the target block size $q_{max} \le B$ (since
we are competing with schedules with maximum block size $B$).  If we have a bound on the maximum transaction size then the maximum block size
of this algorithm is in fact sharper: $c=1+q_{max}/B$.

\subsection{Our Main Result}\label{sec:main-result}

Our main result is that \EIP algorithms have the attractive property that for patient  bidders (with some bounds on the per-unit valuations), for any time horizon,   
they give close to optimal welfare with respect to the benchmark of all schedules 
that are restricted to maximum block size $B$ and are running for slightly fewer steps.

This may be viewed as surprising since the algorithm is both restricted to online decisions, and furthermore, is restricted to be price-based, and yet competes with all offline clairvoyant schedules. On the other hand, it is not limited by a worst-case block size limit $B$, but rather block size limit $B$ holds only on the average with some slackness $\Delta$ (as shown in \cref{prop:EIP-slack}). 
Additionally, the algorithm gets an extension of additional $\Gamma$ time steps, for some appropriately chosen $\Gamma$ (that does \emph{not} grow with the time $T$).

{We use $ALG$ to denote the schedule produced by an \EIP algorithm. Recall that for schedule $S$ we use $SW(S,[1,t])$ the total social welfare of $S$ up to time $t$.}

\begin{btheorem}\label{thm:basic}
        Fix an \EIP algorithm with parameters $(B, c, \eta, p_{min},p_1)$ 
        for some $c>2$. 
        Consider any (adversarially chosen) sequence $\trans$ of input transactions where each transaction $i\in \trans$ has per-unit value $v_i$ satisfying
        $e^{\eta} \cdot p_{min} \le v_i \le v_{max}$ 
        for some 
        $v_{max}\ge p_1$. Let $OPT_B$ be an arbitrary schedule with maximum block size $B$.
        Then for any time $T$ it holds that 
          $$SW(ALG,[1,T+\Gamma])
    \ge (1-\delta) \cdot 
    SW(OPT_B,[1,T])
    $$ 
for
     $\delta = 1-e^{-\eta}\leq \eta$ and
     {any integer}
     $\Gamma {\geq 
     \frac{1}
     {\eta}}\cdot \ln\left(\frac{v_{max}}{p_{min}}\right) \cdot \max \{1,\frac{1}{c-2}\} + c$.  
\end{btheorem}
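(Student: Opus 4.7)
The plan is to lift the pointwise-threshold argument from the greedy warm-up (Proposition \ref{warmup}) to the EIP-1559 setting. As in the warm-up, I will express both $SW(ALG,[1,T+\Gamma])$ and $SW(OPT_B,[1,T])$ as integrals of $Q(\cdot,\theta)$ over $\theta$ using Lemma \ref{lem:intQ}, and reduce the theorem to a pointwise comparison of the form
\[
Q_{[1,T+\Gamma]}(ALG,\theta)\;\ge\;Q_{[1,T]}\bigl(OPT_B,\theta\cdot e^{\eta}\bigr)\qquad \text{for every } \theta\in[p_{min},\,v_{max}\cdot e^{-\eta}].
\]
The multiplicative loss $(1-\delta)=e^{-\eta}$ then arises automatically from the change of variables $\theta\mapsto \theta e^{-\eta}$ in $\int_0^{\infty} Q(\cdot,\theta)d\theta$, and the hypothesis $v_i\ge e^\eta p_{min}$ ensures that after the shift $\theta e^{\eta}$ we remain in the regime where $OPT_B$'s high-value transactions are actually admissible to ALG (nothing is lost at the $p_{min}$ floor). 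This change-of-variables step is the source of the $\eta$-dependence in $\delta$, so the rest of the work is purely a pointwise, threshold-by-threshold volume comparison.

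To prove the pointwise inequality, I fix $\theta$ and introduce $\alpha=\alpha_\theta$, the last time $t\le T+\Gamma$ at which $p_t\le \theta$ (setting $\alpha=0$ if none). I then split $[1,T+\Gamma]$ into Phase~I $=[1,\alpha]$ and Phase~II $=(\alpha,T+\Gamma]$. In Phase~II the price strictly exceeds $\theta$, so \emph{every} transaction scheduled by $ALG$ has per-unit value $>\theta$ and contributes in full to $Q_t(ALG,\theta)$; Proposition \ref{prop:EIP-slack} used as a \emph{reverse} slackness bound then forces the total volume scheduled over these $T+\Gamma-\alpha$ blocks to be at least $(T+\Gamma-\alpha-\Delta)\cdot B$ with $\Delta=\tfrac{1}{\eta}\ln(v_{max}/p_{min})+(c-1)$, since otherwise the price could not have risen above $\theta$ from any point in Phase~I. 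Meanwhile $OPT_B$ can schedule at most $(T-\alpha)\cdot B$ worth of transactions over $(\alpha,T]$ because its blocks are bounded by $B$, so Phase~II's contribution alone suffices once $\Gamma\ge \Delta+c$.

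In Phase~I ($t\le \alpha$), at every step with $p_t\le \theta$ any pending transaction of value at least $\theta\cdot e^{\eta}$ is admissible. By the maximal-by-inclusion property of the adversary's selection, at each such step \emph{either} all such admissible high-value transactions are scheduled, \emph{or} the block's scheduled volume is within $q_{\max}\le B$ of the maximum $cB$, i.e.\ $Q_t\ge (c-1)B$. In the second (``wasted'') case the wasted block still contributes $\ge (c-1)B$ to the aggregate volume; I combine this with the Phase~II slackness identity to conclude that the number of Phase~I wasted blocks is bounded by $\Delta/(c-2)$, which is exactly where the factor $\max\{1,1/(c-2)\}$ in the bound on $\Gamma$ comes from (and explains why $c>2$ is required when $q_{max}=B$). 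The additive $+c$ in $\Gamma$ absorbs boundary effects at the last block and the first few blocks where $p_1$ can exceed $p_{min}$.

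The main obstacle is the Phase~I analysis under \emph{adversarial} maximal selection: the adversary may legitimately exclude a high-value pending transaction at a time when $p_t\le\theta$, provided it fills the block with lower-value (but still admissible) ones. Handling this cleanly requires a two-sided accounting in which one side tracks the $Q_t(ALG,\theta)$ volume (which the adversary can suppress) and the other side tracks total $Q_t(ALG)$ volume (which the adversary cannot suppress below the maximality threshold), and then couples them via the exponential price dynamics. The arithmetic identity that closes the proof is exactly the slackness inequality of Proposition \ref{prop:EIP-slack} applied separately to Phase~I and Phase~II, with the $\max\{1,1/(c-2)\}+c$ factor in $\Gamma$ being the tightest $\Gamma$ for which this balance yields the pointwise inequality for every $\theta$.
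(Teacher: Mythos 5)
Your overall frame matches the paper's: reduce to a threshold-by-threshold volume comparison via \cref{lem:intQ}, absorb the factor $1-\delta=e^{-\eta}$ through the change of variables $\theta\mapsto\theta e^{-\eta}$, split time at a threshold-dependent moment $\alpha_\theta$, and lower-bound the algorithm's volume after the split by telescoping the exponential price updates. However, your Phase~I treatment has a genuine gap, and it is exactly the point your sketch flags as "the main obstacle". With your definition of $\alpha$ (last time $p_t\le\theta$), the claim that the number of Phase~I "wasted" blocks is bounded by $\Delta/(c-2)$ is not established by the slackness identity and is false as stated: the slackness bound only controls the \emph{average} block size, so it caps the \emph{fraction} of blocks with $Q_t\ge(c-1)B$ by roughly $1/(c-1)$, not their number by a constant — the price can repeatedly climb during crowd-out episodes and fall back below $\theta$ after the backlog clears, so wasted blocks can recur $\Omega(T)$ times within $[1,\alpha]$. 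Moreover, even a \emph{single} wasted block can exclude an arbitrarily large volume of high-value pending demand, so counting wasted blocks does not by itself bound the welfare that $OPT_B$ can collect in Phase~I while the algorithm schedules low-value fillers; to close this one needs the persistence argument that an unscheduled backlog of transactions with value $\ge\theta$ forces near-full blocks whenever $p_t$ is at most their value, hence keeps the price from returning below $\theta$ — an argument your sketch does not make.

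The paper sidesteps all of this by choosing the split point differently: $\alpha$ is the last time at which demand at $\theta$ was \emph{necessarily met}, i.e.\ $p_\alpha\le\theta$ \emph{and} $Q_\alpha\le cB-q_{max}$. Maximality of the adversary's selection then gives the Phase~I comparison for free (every transaction of value $\ge\theta$ that arrived by $\alpha$ has been scheduled by $\alpha$), and after $\alpha$ every block with $p_t\le\theta$ is necessarily overfull, so the price climbs at rate at least $e^{\eta(c'-1)}$ until it first reaches $\theta e^{-\eta}$ (time $\beta$) and never drops below it afterwards; the telescoping volume bound is then applied on $[\beta,T+\Gamma]$. In particular, the factor $\max\{1,\tfrac{1}{c-2}\}$ in $\Gamma$ does not come from counting wasted blocks: it is the length $\beta-\alpha\le\frac{1}{\eta(c'-1)}\ln(\theta e^{-\eta}/p_{min})+O(1)$ of the climb from (possibly) $p_{min}$ up to $\theta e^{-\eta}$, and the additive $c$ absorbs the single overshoot/undershoot of the price at $\beta$ and at the horizon. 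Your Phase~II telescoping is essentially the paper's \cref{Eq:sumQt} (and your use of the hypothesis $v_i\ge e^{\eta}p_{min}$ to rule out the $p_{min}$ clamp is the right instinct), but without the "necessarily met" definition — or an equivalent backlog-persistence argument — the Phase~I comparison, which carries most of the welfare when $\alpha$ is close to $T$, is not proved.
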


\cref{thm:basic} is a corollary of a more general and stronger result that we prove. First, instead of the benchmark being only integral allocations, we can take the benchmark to be all \emph{fractional} allocations (in which a transaction with per-unit value $v_i$ that is allocated a fraction $x_i\in [0,1]$ contributes $v_i\cdot q_i\cdot x_i$ to the value of the schedule). Second, we present tighter results for the case that $q_{max}<B$, replacing the assumption that $c>2$ with the assumption that $c>1+q_{max}/B$.
Third, instead of the benchmark having worst-case block size limit $B$, we allow it to only have \emph{average  block size limit $B$ with slackness $\Delta'$}. To allow for this extra slackness we will allow the algorithm an additional $\Delta'$ time steps (effectively increasing $\Gamma$ by $\Delta'$). 
Finally, we will allow the initial price $p_1$ to be arbitrary. When it is larger than $v_{max}$ it will imply that $\Gamma$ grows as $\ln \frac{p_1}{p_{min}}$. 

\begin{btheorem}\label{thm--ub--eip1559-optimal-frac}    
{Fix an \EIP algorithm with parameters $(B, c, \eta, p_{min},p_1)$. 
        Consider any (adversarially chosen) sequence $\trans$ of input transactions where each transaction $i\in \trans$  has size at most $q_{max}$ and has per-unit value $v_i$ satisfying
        $e^{\eta} \cdot p_{min} \le v_i \le v_{max}$. Assume that $c>1+\frac{q_{max}}{B}$.
        Let $OPT_{B,\Delta'}$ be an arbitrary \emph{fractional schedule with average block size limit $B$ and slackness $\Delta'$}.
        Then for any time $T$ it holds that 
          $$SW(ALG,[1,T+\Gamma])
    \ge (1-\delta) \cdot 
    SW(OPT_{B,\Delta'},[1,T])
    $$
     for $\delta = 1-e^{-\eta}\leq \eta$ and 
     {for any integer 
     {$\Gamma \geq \max \left\{  \frac{1}{\eta}\ln \left(\frac{p_{1}}{p_{min}}\right),   \frac{1}{\eta(c'-1)}\cdot \ln \left(\frac{v_{max}}{p_{min}}\right) +c-1  
+  \frac{c-2}{c'-1}\right\} +\Delta' $}, 
where  $c'=c-\frac{q_{max}}{B}$.}
}
\end{btheorem}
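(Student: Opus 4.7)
The plan is to reduce \cref{thm--ub--eip1559-optimal-frac} to a single per-threshold quantity inequality, in the spirit of the warm-up lemma from \cref{warmup}, but with an $e^{-\eta}$-shifted threshold that absorbs EIP-1559's one-step price overshoot. Concretely, I would prove the central lemma: for every $\theta \ge p_{min}\cdot e^{\eta}$,
\begin{equation*}
Q_{[1,T+\Gamma]}(ALG,\,\theta/e^{\eta}) \;\ge\; Q_{[1,T]}(OPT_{B,\Delta'},\,\theta).
\end{equation*}
The theorem then follows by integrating both sides in $\theta$ via \cref{lem:intQ} and substituting $\theta = e^{\eta}\theta'$, which introduces a Jacobian factor $e^{\eta}$ and gives $SW(ALG,[1,T+\Gamma]) \ge e^{-\eta}\cdot SW(OPT_{B,\Delta'},[1,T]) = (1-\delta)\cdot SW(OPT_{B,\Delta'},[1,T])$. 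Thresholds $\theta < p_{min}\cdot e^{\eta}$ contribute nothing on the right (by the per-unit value lower bound) and $\theta > v_{max}$ contributes nothing on either side, so all integrals are finite.

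To prove the central lemma, fix $\theta$, set $\theta' = \theta/e^{\eta}$, and let $\alpha_\theta$ be the largest $t\le T+\Gamma$ with $p_t\le\theta'$ (or $\alpha_\theta = 0$ if none). On the \emph{expensive suffix} $(\alpha_\theta,T+\Gamma]$ every scheduled transaction has per-unit value strictly above $\theta'$, so its full mass counts in $Q(\cdot,\theta')$. Since $p_t>\theta'\ge p_{min}$ throughout the suffix, the minimum-price floor is inactive (except possibly at the last step, which is a benign edge case), so the price-update recursion telescopes to
$$\ln\frac{p_{T+\Gamma+1}}{p_{\alpha_\theta+1}} \;=\; \sum_{t=\alpha_\theta+1}^{T+\Gamma}\eta\,\frac{Q_t-B}{B}.$$
Using $p_{\alpha_\theta+1}\le\theta'\cdot e^{\eta(c-1)}$ (one step of overshoot from $p_{\alpha_\theta}\le\theta'$) and $p_{T+\Gamma+1}\ge\theta'\cdot e^{-\eta}$ (the price can fall by at most $e^{-\eta}$ in a step), I get $\sum_{t=\alpha_\theta+1}^{T+\Gamma}Q_t \ge (T+\Gamma-\alpha_\theta - c)\cdot B$, which already dominates OPT's $(T-\alpha_\theta+\Delta')\cdot B$ upper bound on the same interval as soon as $\Gamma$ carries the required additive $\Delta' + O(c)$ term.

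On the \emph{cheap prefix} $[1,\alpha_\theta]$, at time $\alpha_\theta$ every pending transaction with per-unit value $\ge\theta$ is admissible since $\theta>\theta'\ge p_{\alpha_\theta}$. In the easy case all such transactions get scheduled, yielding $Q_{[1,\alpha_\theta]}(ALG,\theta)\ge A_\theta(\alpha_\theta)\ge Q_{[1,\alpha_\theta]}(OPT_{B,\Delta'},\theta)$. The delicate case is when the adversary squeezes out a pending high-value transaction $j$ by choosing a maximal set dominated by lower-value admissible transactions; maximality forces $Q_t > cB - q_j \ge c'B$ with $c' = c - q_{max}/B > 1$, so the next price jumps by at least $e^{\eta(c'-1)}$. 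The hypothesis $c>1+q_{max}/B$ is precisely what makes $c'-1>0$ and buys a strictly positive log-price rise on every squeezed-out cheap block. A global log-price accounting argument, analogous to \cref{prop:EIP-slack} but incorporating the $\ln(p_1/p_{min})$ contribution for a large initial price, then caps the total number of squeezed-out cheap blocks at roughly $\tfrac{1}{\eta(c'-1)}\ln(v_{max}/p_{min}) + O(1)$, and each such block is absorbed by the corresponding extra slack in $\Gamma$.

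The main obstacle, and the step I expect to require the most care, is the bookkeeping that fuses the two phases into a single clean per-threshold inequality. The subtlety is that a squeezed-out high-value transaction is not automatically recovered later: once the price rises above its value it becomes inadmissible to ALG. I would therefore charge each OPT-scheduled high-value transaction in $[1,T]$ either to a matching ALG-scheduled transaction of value $\ge\theta'$ (directly in the cheap prefix, including the full-capacity blocks generated by squeeze-out events, which contribute many $\theta'$-qualifying units whenever the squeezed-out transaction sits on a reasonably ``cheap'' block) or to one of the $(T+\Gamma-\alpha_\theta-c)B$ units of capacity in the expensive suffix. Finally, checking that the exact arithmetic $\Gamma \ge \max\{\tfrac{1}{\eta}\ln(p_1/p_{min}),\;\tfrac{1}{\eta(c'-1)}\ln(v_{max}/p_{min}) + c-1 + \tfrac{c-2}{c'-1}\} + \Delta'$ tightly covers (i) OPT's slackness $\Delta'$, (ii) the $c-1$ boundary loss in the suffix, (iii) the $\tfrac{1}{\eta(c'-1)}\ln(v_{max}/p_{min})$ squeezed-out cheap blocks at the possibly faster rate $e^{\eta(c'-1)}$, and (iv) the initial-price burn-in $\tfrac{1}{\eta}\ln(p_1/p_{min})$ (relevant when $p_1$ is so large that no cheap time exists inside $[1,T+\Gamma]$, in which case $\alpha_\theta = 0$ and every block is expensive, so the suffix bound alone suffices), is a delicate but essentially routine case analysis.
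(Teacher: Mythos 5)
Your overall architecture coincides with the paper's: reduce the theorem to the per-threshold capacity inequality $Q_{[1,T+\Gamma]}(ALG,\theta e^{-\eta})\ge Q_{[1,T]}(OPT_{B,\Delta'},\theta)$, integrate via \cref{lem:intQ} with the change of variables that produces the factor $e^{-\eta}=1-\delta$, and lower-bound the capacity of the ``expensive suffix'' by telescoping the multiplicative price update; your suffix computation is essentially \cref{Eq:sumQt} together with \cref{lem:technical-ineq} and is fine. The gap is in the cheap prefix. You split at $\alpha_\theta=$ the last time $p_t\le\theta'$, and your argument then rests on a \emph{global} bound of $\frac{1}{\eta(c'-1)}\ln(v_{max}/p_{min})+O(1)$ on the number of ``squeezed-out cheap blocks''. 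That bound is false: a squeeze-out block raises the price by a factor $e^{\eta(c'-1)}$, but nothing prevents the price from subsequently falling back (on empty or small blocks) so that the adversary can squeeze out a fresh high-value transaction again; one can arrange $\Theta(T)$ squeeze-out blocks spread over $\Theta(T)$ short up--down price cycles, so absorbing ``each such block'' into the constant $\Gamma$ is impossible. Your fallback charging is also unsound at exactly this point: a squeeze-out block at a price $p_t$ well below $\theta'$ may be filled entirely by transactions with per-unit value in $[p_t,\theta')$, contributing \emph{zero} to $Q(\cdot,\theta')$, so it supplies no ``$\theta'$-qualifying units'' to charge against.

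The repair --- and the paper's key definitional move --- is to pivot not at the last cheap price but at $\alpha=$ the last time at which demand at $\theta$ was \emph{necessarily met}, i.e.\ $p_t\le\theta$ \emph{and} $Q_t\le c'B$. This kills both problems at once. Up to $\alpha$ there is nothing to charge: every value-$\ge\theta$ transaction that has arrived has actually been scheduled, because a transaction squeezed out in one block is recovered at the next demand-met block (your worry that it ``is not automatically recovered'' only materializes \emph{after} the last such block). After $\alpha$, every block with $p_t\le\theta$ has $Q_t>c'B$ by maximality of $\alpha$, so the price rises by at least $e^{\eta(c'-1)}$ per step until it first reaches $\theta'$ at some time $\beta$, and never drops below $\theta'$ afterwards; hence there is only a \emph{single} transition window $[\alpha+1,\beta-1]$, of length at most $\frac{1}{\eta(c'-1)}\ln(\theta'/p_{min})$, and it is simply written off against $\Gamma$ rather than recovered by a charging scheme. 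Without this pivot, the ``delicate but essentially routine case analysis'' you defer does not close.
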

{We note that we do not assume any knowledge of $v_{max}$ by the algorithm (it is only used for analysis).}

{To see that \cref{thm:basic} follows from \cref{thm--ub--eip1559-optimal-frac}, observe that
when $q_{max}=B$ we have  
$c>1+\frac{q_{max}}{B}=2$.
Additionally, the benchmark of integral schedules with maximal block size $B$ is clearly weaker than the benchmark of fractional schedule with average block size limit $B$ and slackness $\Delta'\geq 0$. Finally, as $c=c'+1$, the assumption that $v_{max}\ge p_1$ implies that
$\frac{1}{\eta}\ln \left(\frac{p_{1}}{p_{min}}\right)\leq \frac{1}{\eta}\ln \left(\frac{v_{max}}{p_{min}}\right)$ and thus $\Gamma$ must be at least $c+  \max \left\{1,\frac{1}{c-2}\right\} \cdot
     \frac{1}{\eta}\cdot \ln \left(\frac{v_{max}}{p_{min}}\right)$ as 
$$
c+  \max \left\{1,\frac{1}{c-2}\right\} \cdot
     \frac{1}{\eta}\cdot \ln \left(\frac{v_{max}}{p_{min}}\right)=
c-1  
+  \frac{c-2}{(c-1)-1}
+  \max \left\{1,\frac{1}{c-2}\right\} \cdot
     \frac{1}{\eta}\cdot \ln \left(\frac{v_{max}}{p_{min}}\right) \geq \frac{1}{\eta}\ln \left(\frac{p_{1}}{p_{min}}\right)$$}

{We remark that if $p_{min}>v_{min}\cdot e^{-\eta}$ then the result still holds, but with respect to schedules 
that are only allowed to schedule transactions with value per unit  at least $p_{min}\cdot e^{\eta}$. That is, there is an additional additive loss in welfare that is bounded by the welfare of any schedule of transactions of value per unit less than $p_{min}\cdot e^{\eta}$.
}

The cornerstone of the proof of the main result is the next lemma which shows 
{that for any $\theta$, the \EIP algorithm schedules large enough size 
of transactions of per-unit value $\theta$, as long as it gets a long enough extension. The theorem will be easily deduced from this bound using 
\cref{lem:intQ}.
}
{For the next lemma, recall that $Q_{t}(S,\theta)$ denotes the total size of transactions with per-unit value of at least $\theta$ that were scheduled by $S$ at time $t$.
}

\begin{lemma}\label{lem:cover}
    Fix an \EIP algorithm with parameters $(B, c, \eta, p_{min},p_1)$. 
    Consider any (adversarially chosen)
    sequence of input transactions, 
    and assume that each transaction up to time $T$ has a  per-unit value at least $v_{min}$, and that $v_{min}\geq e^{\eta} \cdot p_{min} $.    
    {Assume also that $c>1+\frac{q_{max}}{B}$. Let $E$ denote the schedule of the \EIP algorithm on the input.}
For any fractional schedule $S$
with average block size limit $B$ and slackness $\Delta'$, 
for any 
$\theta \geq 0$ 
 it holds that

 \begin{equation}\label{eq:sizes}
     Q_{[1,T]}(S, \theta)= \sum_{t=1}^{T} Q_{t}(S,\theta) 
     \le \sum_{t=1}^{T+\Gamma} Q_{t}(E,\theta \cdot e^{-\eta}) 
     =Q_{[1,T+\Gamma]}(E, \theta \cdot e^{-\eta})
\end{equation}

{for any integer {$\Gamma \geq \max \left\{  \frac{1}{\eta}\ln \left(\frac{p_{1}}{p_{min}}\right),   \frac{1}{\eta(c'-1)}\cdot \ln \left(\frac{v_{max}}{p_{min}}\right) +c-1  
+  \frac{c-2}{c'-1}\right\} +\Delta' $},
where  $c'=c-\frac{q_{max}}{B}$.}
\end{lemma}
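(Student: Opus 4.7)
The approach is to mirror the Greedy warm-up (\cref{warmup}), splitting $[1, T + \Gamma]$ at a pivot time and adapting it to EIP's price-based dynamics with adversarial tie-breaking. I take $\alpha$ to be the last $t \le T + \Gamma$ with both $p_t \le \theta$ \emph{and} $Q_t(E) \le c' B$ (where $c' = c - q_{max}/B$; set $\alpha = 0$ if no such $t$ exists). The conjunction is engineered so that at $t = \alpha$, any transaction with per-unit value $\ge \theta$ is a candidate (since $v \ge \theta \ge p_\alpha$) \emph{and} such a candidate would fit in the block ($q + Q_\alpha(E) \le q_{max} + c' B = cB$).

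Phase~1 is then immediate from the adversary's maximality: if some $i$ with $v_i \ge \theta$ were pending at the end of $\alpha$, it would be a candidate that fits, contradicting the maximal choice. Consequently $E$ has scheduled every $\theta$-valued arrival by time $\alpha$: $Q_{[1,\alpha]}(E,\theta) = A_\alpha(\theta) \ge Q_{[1,\alpha]}(S,\theta)$, where $A_\alpha(\theta)$ denotes the total size of arrivals by $\alpha$ with per-unit value at least $\theta$. If $\alpha \ge T$ we conclude directly; otherwise, combining this with $Q_{[\alpha+1, T]}(S, \theta) \le (T - \alpha + \Delta')B$ (by $S$'s slackness) reduces the lemma to showing $Q_{[\alpha+1, T+\Gamma]}(E, \theta e^{-\eta}) \ge (T - \alpha + \Delta')B$.

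The Phase~2 argument is the crux. On $[\alpha+1, T+\Gamma]$ the maximality of $\alpha$ forces the dichotomy ``$p_t > \theta$ or $Q_t(E) > c'B$'' at every step. I isolate the initial ``low-price'' prefix $L^+ = \{t \in [\alpha+1, T+\Gamma] : p_t < \theta e^{-\eta}\}$, where scheduled stuff may have per-unit value $< \theta e^{-\eta}$ and thus fail to contribute to the RHS. A short case analysis --- $p_t > \theta$ drops by at most $e^{-\eta}$ in one step; $p_t \in [\theta e^{-\eta}, \theta]$ forces $Q_t > c'B > B$ hence $p_{t+1} > p_t$ --- shows that once the price clears $\theta e^{-\eta}$ it never returns, so $L^+$ is a consecutive prefix. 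Since $Q_t > c'B$ inside $L^+$ drives $p_{t+1} \ge p_t \cdot e^{\eta(c'-1)}$, its length satisfies $|L^+| \le \tfrac{1}{\eta(c'-1)} \ln(\theta e^{-\eta}/p_{\alpha+1})$.

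On the complementary suffix $R^+ = [\alpha' + 1, T+\Gamma]$ (with $\alpha' = \alpha + |L^+|$) we have $p_t \ge \theta e^{-\eta}$ throughout, so $Q_t(E, \theta e^{-\eta}) = Q_t(E)$ and the $p_{min}$ floor is not hit in the relevant range of $\theta$. The multiplicative update then gives the \emph{equality} $\sum_{t \in R^+}(Q_t - B) = \tfrac{B}{\eta} \ln(p_{T+\Gamma+1}/p_{\alpha'+1})$; plugging in an upper bound on $p_{\alpha'+1}$ (from one step of the update rule, either out of $L^+$ or directly out of $\alpha$) together with $p_{T+\Gamma+1} \ge \theta e^{-2\eta}$ (since the worst per-step drop is $e^{-\eta}$) yields a lower bound of the shape $\sum_{R^+} Q_t(E) \ge (|R^+| - O(1))B$, which dominates $(T - \alpha + \Delta')B$ once $\Gamma \ge \Delta' + |L^+| + O(1)$. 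Substituting the bound on $|L^+|$, and separately handling the sub-case $\alpha = 0$ with large $p_1$ --- where the initial ``high-price cooling'' before any scheduling begins contributes the $\tfrac{1}{\eta}\ln(p_1/p_{min})$ term via the analogous growth argument --- gives the stated $\Gamma$. The main obstacle is pinning down the additive constants tightly enough to match $(c-1) + \tfrac{c-2}{c'-1}$ in the statement, which requires treating the $L^+$-empty and $L^+$-nonempty regimes separately (the empty case uses the bound $p_{\alpha+1} \le \theta e^{\eta(c'-1)}$ via $Q_\alpha \le c'B$, the nonempty case the tighter $p_{\alpha'+1} \le \theta e^{\eta(c-2)}$ via $p_{\alpha'} < \theta e^{-\eta}$ and $Q_{\alpha'} \le cB$), and verifying that clipping at $p_{min}$ cannot occur on $R^+$ across the full range $\theta \ge 0$.
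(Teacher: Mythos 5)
Your proposal is correct and follows essentially the same route as the paper's proof: your pivot $\alpha$, the maximality argument for Phase~1, the prefix $L^+$ (the paper's interval $[\alpha+1,\beta-1]$), the once-above-$\theta e^{-\eta}$-stays-above invariant, and the telescoping product on the suffix all match the paper's argument step for step. The only (acknowledged) looseness is using $p_{T+\Gamma+1}\ge\theta e^{-2\eta}$ where your own invariant already gives $p_{T+\Gamma+1}\ge\theta e^{-\eta}$, which is what the paper uses to land on the stated constants.
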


{
The theorem easily follows from the main lemma by integrating 
\cref{eq:sizes} over $\theta$, and then applying  \cref{lem:intQ}.
By integrating \cref{eq:sizes} we conclude that
\begin{equation}\label{eq:int-sizes}
\int_0^\infty 
Q_{[1,T]}(S, \theta) 
d\theta 
\le \int_0^\infty Q_{[1,T+\Gamma]}(E, \theta \cdot e^{-\eta})
d\theta
\end{equation}

By \cref{lem:intQ} the LHS of \cref{eq:int-sizes} is exactly $SW(S, [1,T])$.  
To evaluate the RHS of \cref{eq:int-sizes},
we make a linear change of variable (with our notation  
reusing $\theta$)
$\int_0^\infty 
Q_{[1,T+\Gamma]}(E, \theta \cdot e^{-\eta})
d \theta=
e^\eta \int_0^\infty Q_{[1,T+\Gamma]}(E, \theta )
d\theta$, and
we use 
\cref{lem:intQ} again to conclude that the RHS is exactly 
$e^\eta \cdot SW(E, [1,T+\Gamma])$,
as needed.}

\subsection{Proof of The Main Lemma}\label{sec:main-lemma}

{Before giving a formal proof of \cref{lem:cover}, let us give the intuition drawing
parallels with the online algorithm above.  Similarly to the previous
analysis we will look at the last point in time
$\alpha_\theta$ (to be simply called $\alpha$ below) where the price was less than $\theta$ and the quantity scheduled is at most $Q_t\leq c\cdot B-q_{max}$, meaning that every transaction with per-unit value at least the price was scheduled.  
As previously, up to that point, {the \EIP algorithm} 
scheduled everything with per-unit value of at least $\theta$, so it competes with {any schedule $S$ (and also with a fractional one).} 
From this point on there are always enough pending transactions that are willing to pay at least $\theta$ per-unit (quantity larger than $c\cdot B-q_{max}$), so 
EIP-1559 
{prices will increase rapidly until they are 
at least $\theta'=\theta\cdot e^{-\eta}$ 
(at a time to be called $\beta$ below),
from which point they 
can never decrease much below $\theta'$} (prices will always be at least $\theta'=\theta\cdot e^{-\eta}$, the minimum price when price decreases from $\theta$ in a single step). 
So after this point in time, {the \EIP algorithm} 
fills its quota with transactions of value per unit of at least $\theta\cdot e^{-\eta}$, 
and since its average block size {limit} is the same as that of $S$,
it competes
with it successfully given enough scheduling time. 
To compensate for the time it took for the price to rise up to at least $\theta'$ (the time between $\alpha$ and $\beta$), 
we allow the algorithm an extended time of extra $\Gamma$ steps to schedule transactions. Additionally, {we allow an extra $\Delta'$ steps to the algorithm, in order for it
to compete also with schedules that
have {\em average} block size limitations 
with slackness $\Delta'$.}
We suffer a loss of a $1-e^{-\eta}$ factor
due to the possible single price decrease below $\theta$.

We next present the proof of \cref{lem:cover}. 
\begin{proof}
We prove the claim for $\theta \in [v_{min}, v_{max}]$ since the claim is trivial for $\theta>v_{max}$, and immediate for $\theta<v_{min}$ from the claim for $\theta=v_{min}$.
Recall that $c'=c-\frac{q_{max}}{B}$. 
As 
$c'> 1$ it holds that $c\cdot B-q_{max} = c'\cdot B > B$. 
We denote 
$\theta'= \theta \cdot e^{-\eta}$ 
and observe that $\theta>\theta'> \theta \cdot (1-\eta)$ and that $\theta'=e^{-\eta}\cdot \theta\geq e^{-\eta}\cdot v_{min} \geq p_{min}$ as $\theta \geq  v_{min} \geq  e^{\eta}\cdot p_{min} $. 
We say that \emph{demand at price $\theta$ was met at time $t$} if every transaction $i$ such that $v_i\geq \theta$ and $t_i\leq t$ has been scheduled by the algorithm at or before time $t$. 
Observe that if at time $t$ it holds that $p_t\leq \theta $ and $Q_t\leq c\cdot B-q_{max} = c'\cdot B$, then every transaction $i$ such that $v_i\geq \theta$ and $t_i\leq t$ must have been scheduled by the algorithm (as any such unscheduled transaction can fit the block $t$ as $Q_t+q_i\leq c\cdot B-q_{max}+q_i\leq c\cdot B$). In such a case we say that \emph{demand at price $\theta$ was necessarily met at time $t$} (as all demand at price $\theta$ was scheduled for sure by time $t$).

Trivially, demand at any price $\theta$ was necessarily met at time $t=0$ (as no demand arrived yet).
Let $\alpha=\alpha_\theta\in [0,T+\Gamma]$ denote the latest time $t\in [0,T+\Gamma]$ such that demand at price $\theta$ was necessarily met at time $t$. We assume for convenience that $p_0=p_{min} \leq \theta$ and $Q_0=0$, so for $t=\alpha$ (even when $\alpha=0$) it holds that $p_{\alpha}\leq  \theta$ and $Q_{\alpha}\leq c'\cdot B$.

To simplify notation we use $S_t(\theta) = Q_{t}(S,\theta)$ and $E_t(\theta) = Q_{t}(E,\theta)$.
Let $Z_{t}(\theta)$ denote the total size of all transactions
that arrived by time $t$ with value at least $\theta$, i.e. {the size of transaction $i$ is counted towards $Z_{t}(\theta)$}
if $v_i\geq \theta$ and $t_i\leq t$. It holds that
\begin{equation*}
\begin{aligned}
 \sum_{t=1}^{\alpha} S_t(\theta)  \leq  Z_{\alpha}(\theta) 
=  \sum_{t=1}^{\alpha} E_t(\theta)
\leq  \sum_{t=1}^{\alpha} E_t(\theta') 
\end{aligned}
\end{equation*}
where the rightmost inequality is due to the fact that $E_t$ is a non-increasing function of $\theta$ and since $\theta\geq \theta'$. 

Thus, if $\alpha\geq T$ then our claim follows as 
\begin{equation*}
\begin{aligned}
\sum_{t=1}^{T} S_t(\theta) & \leq \sum_{t=1}^{\alpha} S_t(\theta)  
\leq  \sum_{t=1}^{\alpha} E_t(\theta') \leq  \sum_{t=1}^{T+\Gamma} E_t(\theta')
\end{aligned}
\end{equation*}

So  we are left with the case that 
$0\leq \alpha< T$. In that case we have 
\begin{equation*}
\begin{aligned}
 \sum_{t=1}^{T} S_t(\theta) =
 \sum_{t=1}^{\alpha} S_t(\theta) +\sum_{t=\alpha+1}^{T} S_t(\theta)   \leq  Z_{\alpha}(\theta) +\sum_{t=\alpha+1}^{T} S_t(\theta) 
\end{aligned}
\end{equation*}

As demand at price $\theta$ was necessarily met at  $\alpha$,  we know that $\sum_{t=1}^{\alpha} E_t(\theta')\geq \sum_{t=1}^{\alpha} E_t(\theta)
=  Z_{\alpha}(\theta) $ so
\begin{equation*}
\begin{aligned}
\sum_{t=1}^{T+\Gamma} E_t(\theta') = 
 \sum_{t=1}^{\alpha} E_t(\theta') +\sum_{t=\alpha+ 1}^{T+\Gamma} E_t(\theta') \geq  
  Z_{\alpha}(\theta)+\sum_{t=\alpha+ 1}^{T+\Gamma} E_t(\theta')
\end{aligned}
\end{equation*}
We conclude that to complete the proof we only need to show that $\sum_{t=\alpha+1}^{T} S_t(\theta) \leq \sum_{t=\alpha+ 1}^{T+\Gamma} E_t(\theta')$, knowing that $0\leq \alpha< T$ and that demand at price $\theta$ was {not} necessarily met at any time $t>\alpha$. 

Thus, in the rest of the proof we assume that demand at price $\theta$ was not necessarily met at any time $t\in [\alpha+1,T+\Gamma]$, 
and move to lower bound the total capacity scheduled by the \EIP algorithm between time $\alpha+1$ and $T+\Gamma$.
As demand at price $\theta$ was not necessarily met at any time $t\in [\alpha+1,T+\Gamma]$, then for $t$ in this range of times, whenever
$p_t\leq  \theta$ then  $Q_t> c'\cdot B\geq B$. 
We conclude that if $p_t\leq \theta$ then $p_{t+1} = p_t\cdot e^{\eta\cdot (Q_t-B)/B}>  p_t\cdot e^{\eta\cdot (c'-1)} $. 
Thus, if {$p_{\alpha}< \theta$}  then price will start rising till at some time $t\geq \alpha+1$ it will hold that $p_t\geq \theta\geq \theta'$.

Additionally, we claim that if for some $t>\alpha$ it holds that $p_t\geq \theta'$, then $p_{t+1}\geq \theta'$ (implying that for any $t'>t$ it also holds that 
$p_{t'}\geq \theta'$). Indeed, if $\theta> p_t\geq \theta'$ then $p_{t+1} = p_t\cdot e^{\eta\cdot (Q_t-B)/B}>  p_t\cdot e^{\eta\cdot (c'-1)}\geq p_t\geq \theta'$.
On the other hand, {when} $\theta\leq p_t$ then $p_{t+1} = p_t\cdot e^{\eta\cdot (Q_t-B)/B}\geq   
p_t\cdot e^{-\eta}\geq 
\theta\cdot e^{-\eta}\geq 
\theta'$. 
We conclude that once $p_t\geq \theta'$ it holds $p_{t'}\geq \theta'$ for any $t'>t$.

Now, we use $\beta\in [\alpha+1,T+\Gamma]$ to denote the first time the price was at least $\theta'$ (the existence of such a time is implied by the arguments below since $\Gamma$ is large enough).
So, if $\beta>\mbe{\alpha+} 1$ then it holds that 
$p_{\beta}\geq \theta'$ and $p_{\beta-1}< \theta'$ 
(note that for any time $t\in [\beta, T+\Gamma]$ it holds that $p_t\geq \theta'$). 

To complete the proof we only need to prove the following lemma:
\begin{lemma} It holds that 
\begin{equation*}
\sum_{t=\alpha+1}^{T+\Gamma} E_t(\theta') \geq 
\sum_{t=\beta}^{T+\Gamma} E_t(\theta') 
=\sum_{t=\beta}^{T+\Gamma} {Q_t} 
\geq  
B\cdot (T-\alpha +\Delta')\geq 
    \sum_{t=\alpha+1}^{T} S_t(\theta) 
\end{equation*}
\end{lemma}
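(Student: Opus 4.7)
The plan is to verify the chain of four (in)equalities one at a time, with the third being the crux. The leftmost bound $\sum_{t=\alpha+1}^{T+\Gamma} E_t(\theta') \geq \sum_{t=\beta}^{T+\Gamma} E_t(\theta')$ is immediate since $\beta \geq \alpha+1$ and each $E_t(\theta')\geq 0$. The middle equality $\sum_{t=\beta}^{T+\Gamma} E_t(\theta') = \sum_{t=\beta}^{T+\Gamma} Q_t$ follows from the previously established fact that $p_t \geq \theta'$ for every $t \geq \beta$: the \EIP algorithm only schedules transactions of per-unit value $\geq p_t \geq \theta'$, so at each such time step $E_t(\theta')$ equals the entire $Q_t$. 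The rightmost inequality $B(T-\alpha+\Delta') \geq \sum_{t=\alpha+1}^{T} S_t(\theta)$ is a direct consequence of $S_t(\theta) \leq Q_t(S)$ together with the hypothesis that $S$ has average block size limit $B$ with slackness $\Delta'$, applied to the $T-\alpha$ consecutive blocks indexed $\alpha+1,\ldots,T$.

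The heart of the proof is the third inequality $\sum_{t=\beta}^{T+\Gamma} Q_t \geq B(T-\alpha+\Delta')$, which I would establish using the \EIP price-update formula. Since $p_t \geq \theta' \geq p_{min}$ for every $t \geq \beta$, the floor in the max-update is never active, so $p_{t+1} = p_t \cdot e^{\eta(Q_t-B)/B}$ exactly. Telescoping yields
\[
p_{T+\Gamma+1} \;=\; p_\beta \cdot \exp\!\Bigl(\tfrac{\eta}{B}\bigl(\textstyle\sum_{t=\beta}^{T+\Gamma} Q_t \;-\; (T+\Gamma-\beta+1)\,B\bigr)\Bigr),
\]
so taking logs and rearranging gives
\[
\sum_{t=\beta}^{T+\Gamma} Q_t \;=\; (T+\Gamma-\beta+1)\,B \;+\; \tfrac{B}{\eta}\,\ln\!\bigl(p_{T+\Gamma+1}/p_\beta\bigr).
\]
It therefore suffices to show that $\Gamma \geq (\beta-\alpha-1) + \Delta' + \tfrac{1}{\eta}\ln(p_\beta/p_{T+\Gamma+1})$.

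The remaining work is to furnish upper bounds on $\beta-\alpha-1$ and on $\ln(p_\beta/p_{T+\Gamma+1})$. For the former, recall that for every $t \in [\alpha+1,\beta-1]$ we have $p_t < \theta' < \theta$, and demand at price $\theta$ is by assumption not necessarily met, which forces $Q_t > c'B$; hence the price multiplier at each such step strictly exceeds $e^{\eta(c'-1)}$. Iterating, $p_{\beta-1} > p_{\alpha+1}\cdot e^{\eta(c'-1)(\beta-\alpha-2)}$, and combining $p_{\alpha+1} \geq p_{min}$ with $p_{\beta-1} < \theta' \leq v_{max}$ yields $\beta-\alpha-1 \leq \tfrac{1}{\eta(c'-1)}\ln(v_{max}/p_{min}) + O(1)$. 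For the latter, since once the price reaches $\theta'$ it remains there (shown earlier in the main lemma proof), we have $p_{T+\Gamma+1} \geq \theta'$; and $p_\beta \leq p_{\beta-1}\cdot e^{\eta(c-1)} < \theta'\cdot e^{\eta(c-1)}$ when $\beta > \alpha+1$ (with a symmetric bound via $p_\alpha \leq \theta$ in the boundary case $\beta = \alpha+1$), giving $\ln(p_\beta/p_{T+\Gamma+1}) \leq \eta(c-1)+O(\eta)$.

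The main obstacle I anticipate is the careful bookkeeping of constants to match the exact $\Gamma$-bound in the theorem, in particular pinning down the $\tfrac{c-2}{c'-1}$ additive term and justifying the $\max$ with $\tfrac{1}{\eta}\ln(p_1/p_{min})$; I expect the latter arises in the edge case $\alpha = 0$ when the initial price $p_1$ starts above $v_{max}$ and must first descend (at exponential rate $e^{-\eta}$ per step since $Q_t=0$) before any transaction is scheduled. Conceptually the argument is clean, as the telescoping identity on prices does essentially all of the work of turning online decisions into a statement about total scheduled capacity.
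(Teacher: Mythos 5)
Your proposal follows essentially the same route as the paper's proof: the same four-part decomposition (trivial leftmost inequality, the equality via $p_t\ge\theta'$ for $t\ge\beta$, the slackness bound for the rightmost inequality), the same telescoping of the exact price-update formula reducing everything to showing $\Gamma \ge (\beta-\alpha-1)+\Delta'+\tfrac{1}{\eta}\ln\left(p_\beta/p_{T+\Gamma+1}\right)$, and the same two estimates (price rising by at least $e^{\eta(c'-1)}$ per step on $[\alpha+1,\beta)$ because unmet demand forces $Q_t> c'B$, plus a single overshoot factor $e^{\eta(c-1)}$ at $\beta$, with the $\beta=1$ edge case producing the $\tfrac{1}{\eta}\ln(p_1/p_{min})$ term in the $\max$). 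The constant bookkeeping you flag as remaining is exactly what the paper's separate technical-inequality lemma carries out (using $\theta'\le v_{max}\cdot e^{\eta(c-2)}$ to obtain the $\tfrac{c-2}{c'-1}$ additive term), so your sketch is correct and matches the paper's argument.
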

\begin{proof}
    Observe that the leftmost inequality is trivial as $\beta\geq \alpha+1$. Let us now consider the rightmost inequality. It holds due to the average block size limitation constraint of $B$ with slackness $\Delta'$, which states that  $\sum_{t=\alpha+1}^{T} S_t(\theta) \leq (T-\alpha+\Delta') \cdot B$. The equality  $\sum_{t=\beta}^{T+\Gamma} E_t(\theta') = \sum_{t=\beta}^{T+\Gamma} Q_t  $ holds as for any time $t\in [\beta, T+\Gamma]$ it holds that $p_t\geq \theta'$, and thus all capacity is used to schedule transactions with value per unit  at least $\theta'$.
    
    So to complete the proof we only need to show that $\sum_{t=\beta}^{T+\Gamma} {Q_t} \geq  B\cdot (T-\alpha +\Delta')$.
For any time $t\in [\beta, T+\Gamma]$ it holds that $p_t\geq\theta'=e^{-\eta}\cdot \theta\geq e^{-\eta}\cdot v_{min} \geq p_{min}$
(and thus also $p_{t+1}\geq p_{min}$) so $p_{t+1}=p_t\cdot e^{\eta\cdot \frac{Q_t-B}{B}}$. 

Thus
   \begin{equation*}
    \frac{p_{{T+\Gamma+1}}}{p_{\beta}} = \prod_{t=\beta} ^{T+\Gamma} \frac{p_{t+1}}{p_t} 
    = \prod_{t=\beta}^{T+\Gamma} e^{\eta\cdot \frac{Q_t-B}{B}} = 
    e^{\eta\cdot \left(\sum_{t=\beta}^{T+\Gamma} \frac{Q_t-B}{B}\right)}
   \end{equation*}
So
   \begin{equation*}
    \frac{B}{\eta}\ln \left(\frac{p_{{T+\Gamma+1}}}{p_{\beta}}\right) = \left(\sum_{t=\beta}^{T+\Gamma} {Q_t}\right)-B\cdot \left(T+\Gamma-\beta+1\right)
   \end{equation*}
or equivalently,  
   \begin{equation}\label{Eq:sumQt}
   \sum_{t=\beta}^{T+\Gamma} {Q_t} 
   = B\cdot \left(T + 
    \frac{1}{\eta}\ln \left(\frac{p_{{T+\Gamma+1}}}{p_{\beta}}\right)  +  \Gamma-(\beta-1)\right)
   \end{equation}
So proving that $\sum_{t=\beta}^{T+\Gamma} {Q_t} \geq  B\cdot (T-\alpha +\Delta')$ is immediate from the inequality 
$$\frac{1}{\eta}\ln \left(\frac{p_{{T+\Gamma+1}}}{p_{\beta}}\right)  +  \Gamma-(\beta-1)\geq \Delta'-\alpha$$ which we prove in Lemma \ref{lem:technical-ineq}.

\end{proof}

\begin{lemma}\label{lem:technical-ineq}
    It holds that 
    \begin{equation*}
      \Gamma\geq \frac{1}{\eta}\ln \left(\frac{p_{\beta}}{p_{{T+\Gamma+1}}}\right)  + \Delta' +\beta -(\alpha+1) 
\end{equation*}
\end{lemma}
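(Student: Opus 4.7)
The plan is to upper-bound the right-hand side $\frac{1}{\eta}\ln(p_\beta/p_{T+\Gamma+1}) + \Delta' + \beta - \alpha - 1$ and show that this upper bound never exceeds the two-term maximum that defines the assumed lower bound on $\Gamma$. The essential first step, which makes everything else match on the nose, is to observe that the claim established just before the lemma — that $p_t \geq \theta'$ for all $t \in [\beta, T+\Gamma]$ — actually extends one step further, to $t = T+\Gamma+1$. Indeed, applying the very same case split at $t = T+\Gamma$: if $p_{T+\Gamma} \in [\theta',\theta)$ then $T+\Gamma > \alpha$ means demand at price $\theta$ is not necessarily met at $T+\Gamma$, so $Q_{T+\Gamma} > c'B \geq B$ and the price rises above $\theta'$; if instead $p_{T+\Gamma} \geq \theta$, then the worst-case decrease still gives $p_{T+\Gamma+1} \geq p_{T+\Gamma}e^{-\eta} \geq \theta e^{-\eta} = \theta'$. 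This improves the trivial bound $p_{T+\Gamma+1} \geq p_{min}$ by a factor of $e^\eta$, which is precisely the slack needed for the constants to align.

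With this in place I would split into three cases. If $\beta = 1$ (so $\alpha = 0$ and $\beta-\alpha-1=0$), then $p_\beta = p_1$ and $p_{T+\Gamma+1} \geq p_{min}$ already give $\frac{1}{\eta}\ln(p_\beta/p_{T+\Gamma+1}) \leq \frac{1}{\eta}\ln(p_1/p_{min})$, matching the first term of the maximum. If $\beta = \alpha+1 > 1$, then again $\beta-\alpha-1 = 0$; from $p_\alpha \leq \theta$ and $Q_\alpha \leq c'B$ one gets $p_\beta = p_{\alpha+1} \leq \theta e^{\eta(c'-1)}$, and combining with $p_{T+\Gamma+1} \geq \theta e^{-\eta}$ gives $\frac{1}{\eta}\ln(p_\beta/p_{T+\Gamma+1}) \leq c'$. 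To see that this fits within the second term of the maximum, I would use $\ln(v_{max}/p_{min}) \geq \eta$ (which follows from $v_{max}\geq v_{min}\geq e^\eta p_{min}$) to lower-bound the second term by $(c-1)\cdot\frac{c'}{c'-1}$, and then verify the inequality $c' \leq (c-1)\cdot \frac{c'}{c'-1}$, which simplifies to $c' \leq c$ and therefore holds since $c' = c - q_{max}/B$.

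The remaining case $\beta > \alpha+1$ is the one that matches the second term of the maximum exactly. For every $t \in [\alpha+1, \beta-1]$, the inequality $p_t < \theta' \leq \theta$ combined with $t > \alpha$ forces $Q_t > c'B$, so prices rise multiplicatively by at least $e^{\eta(c'-1)}$ per step; iterating yields $\beta-\alpha-1 \leq \frac{1}{\eta(c'-1)}\ln(p_\beta/p_{\alpha+1}) \leq \frac{1}{\eta(c'-1)}\ln(p_\beta/p_{min})$. In parallel, $p_{\beta-1} < \theta'$ together with the maximum-block-size constraint gives $p_\beta \leq p_{\beta-1}e^{\eta(c-1)} < \theta' e^{\eta(c-1)} = \theta e^{\eta(c-2)} \leq v_{max}e^{\eta(c-2)}$, so the sharper bound $p_{T+\Gamma+1}\geq \theta'$ yields $\frac{1}{\eta}\ln(p_\beta/p_{T+\Gamma+1}) \leq c-1$. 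Adding the two contributions produces exactly $c-1 + \frac{1}{\eta(c'-1)}\ln(v_{max}/p_{min}) + \frac{c-2}{c'-1}$, which is literally the second term in the maximum.

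The main obstacle is really the opening step: without sharpening $p_{T+\Gamma+1} \geq p_{min}$ to $p_{T+\Gamma+1} \geq \theta'$, both Case 2 and Case 3 end up off by an additive constant of $1$, and the identification with the given expression for $\Gamma$ breaks. Once that sharpening is in hand, the rest is a careful but routine tracking of the price-rise factor $e^{\eta(c'-1)}$ during the initial segment $[\alpha+1,\beta-1]$ and the one-step maximum-block-size jump $e^{\eta(c-1)}$ from $p_{\beta-1}$ to $p_\beta$, together with the easy algebraic check in Case 2 that reduces to $c'\leq c$.
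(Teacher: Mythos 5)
Your proof is correct and follows essentially the same route as the paper's: the $\beta=1$ case handled via $\frac{1}{\eta}\ln(p_1/p_{min})$, the per-step rise factor $e^{\eta(c'-1)}$ over $[\alpha+1,\beta-1]$ to bound $\beta-\alpha-1$, the one-step overshoot bound $p_\beta \le \theta'e^{\eta(c-1)}$, and the bound $p_{T+\Gamma+1}\ge\theta'$ (which you present as a key sharpening but which the paper already states and uses explicitly). Your separate treatment of the sub-case $\beta=\alpha+1>1$ via $p_\alpha\le\theta$, $Q_\alpha\le c'B$ is a reasonable refinement — the paper's blanket claim that $p_{\beta-1}<\theta'$ for all $\beta>1$ is not fully justified there, since $p_\alpha$ may lie in $[\theta',\theta]$ — but it is a patch within the same argument rather than a different approach.
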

\begin{proof}
Let us first {get two special cases out of the way.  The first special case is} the case that $\beta=1$ and $\alpha=0$. In this case $p_{\beta}=p_1$
and $p_{{T+\Gamma+1}}\geq p_{min}$.
So
$$\Gamma\geq 
\Delta' +  \frac{1}{\eta}\ln \left(\frac{p_{1}}{p_{min}}\right)
\geq 
\Delta' +  \frac{1}{\eta}\ln \left(\frac{p_{\beta}}{p_{{T+\Gamma+1}}}\right)
$$
where the left inequality holds by the definition of $\Gamma$,
and the claim follows. Thus, in the rest of the proof we assume that it is not the case that $\beta=\alpha+1=1$.

{The second special case is
$\beta=\alpha+1>1$. In this case 
$$p_{\beta}=p_{\alpha+1}=p_{\alpha}\cdot e^{\eta\cdot (Q_\alpha-B)/B}
\leq \theta\cdot e^{\eta\cdot (c'-1)}$$
where the inequality holds since $p_\alpha\leq \theta$ and $Q_\alpha\leq c'B$.
As once the price gets to $\theta'$ it never drops below it, we have $p_{T+\Gamma+1}\geq \theta'$. We conclude that  
$$\frac{p_{\beta}}{p_{T+\Gamma+1}}  \leq     \frac{\theta\cdot e^{\eta\cdot (c'-1)}}{\theta'}= 
e^{\eta\cdot c'}$$
and therefore
\begin{equation*} 
   \frac{1}{\eta}\ln \left(\frac{p_{\beta}}{p_{{T+\Gamma+1}}}\right) \leq 
    \frac{1}{\eta}\ln (e^{\eta\cdot c'})=c'
\end{equation*}
As $\beta-(\alpha+1)=0$, we get
\begin{equation}\label{eq:cpDeltap}
\frac{1}{\eta}\ln \left(\frac{p_{\beta}}{p_{{T+\Gamma+1}}}\right) + \Delta' +\beta -(\alpha+1) \leq c'+\Delta'.
\end{equation}
By the assumption about $\Gamma$, 
\[
\Gamma \geq   \frac{1}{\eta(c'-1)}\cdot \ln \left(\frac{v_{max}}{p_{min}}\right) +c-1  
+  \frac{c-2}{c'-1} +\Delta'.
\]
Since $v_{max}\geq v_{min}\geq e^\eta p_{min}$, we have
\[
\frac{1}{\eta(c'-1)}\ln \left(\frac{v_{max}}{p_{min}}\right)\geq \frac{1}{c'-1}.
\]
Therefore
\[
\Gamma
\geq
\Delta' + c-1 + \frac{1}{c'-1} + \frac{c-2}{c'-1}
=
\Delta' + c-1 + \frac{c-1}{c'-1}.
\]
As $c'\leq c$, we have $c'-1\leq c-1$, and hence
\[
\frac{c-1}{c'-1}\geq 1.
\]
So
\[
\Gamma \geq \Delta' + c \geq \Delta' + c',
\]
and the claim follows by \cref{eq:cpDeltap}. 
}

{We now analyze the general case where $\beta>\alpha+1\geq 1$.}  Recall that $\beta\in [\alpha+1 ,T+\Gamma]$ is the first time such that $p_{\beta}\geq \theta'$.
Thus, as $\beta>\alpha+1$ it holds that $p_{\beta-1}< \theta'$ 
(note that we have shown that for any time $t\in [\beta, T+\Gamma]$ it holds that $p_t\geq \theta'$),
and as $p_{\alpha+1}\geq p_{min}$ we get 
   \begin{equation*}
    \frac{\theta'}{p_{min}}\geq \frac{p_{\beta-1}}{p_{\alpha+1}} 
    \geq e^{\eta\cdot (\sum_{t=\alpha+1}^{\beta-1} (Q_t-B))/B}
    \geq e^{\eta\cdot (\sum_{t=\alpha+ 1}^{\beta-1} (c'-1) )} 
    = e^{\eta\cdot {(\beta-\alpha-1)} (c'-1) } 
   \end{equation*}
Thus
   \begin{equation*}
    \ln \left(\frac{\theta'}{p_{min}}\right)\geq {\eta\cdot {(\beta-\alpha- 1)} (c'-1) } 
   \end{equation*}

and as $\eta>0$ and $ c'>1$ we get 
   \begin{equation} \label{eq:bound-beta}
\frac{1}{\eta(c'-1)}\cdot \ln \left(\frac{\theta'}{p_{min}}\right)\geq \beta-\alpha-1 
   \end{equation}

    As $\eta>0 $ and  $c>c'>1$ (so $c-2>-1)$ it holds 
    that $v_{max}\cdot e^{\eta\cdot (c-2)}> v_{max} \cdot e^{-\eta}
    \geq \theta\cdot e^{-\eta}\geq \theta'$, 
    by
    \cref{eq:bound-beta} 
   \begin{equation}\label{eq:bound-beta2} 
\frac{1}{\eta(c'-1)}\cdot \ln \left(\frac{{v_{max}\cdot e^{\eta\cdot (c-2)}}}{p_{min}}\right)\geq
\frac{1}{\eta(c'-1)}\cdot \ln \left(\frac{\theta'}{p_{min}}\right)\geq \beta-(\alpha+1) 
   \end{equation}

   Since $\beta>\mbe{\alpha+}1$ is the first time the price is at least $\theta'$, 
   the price was rising from time $\beta-1$, and 
   $p_{\beta-1}<\theta'= \theta \cdot e^{-\eta}$. In this case 
   $p_{\beta}$  is at most 
   $\theta'\cdot e^{\eta\cdot (c-1)}$. 
As once the price gets to $\theta'$ it never drops below it, we have $p_{T+\Gamma+1}\geq \theta'$. We conclude that  
$\frac{p_{\beta}}{p_{T+\Gamma+1}}  \leq     \frac{\theta'\cdot e^{\eta\cdot (c-1)}}{\theta'}= 
e^{\eta\cdot (c-1)}$  and therefore 
   \begin{equation*} 
   \frac{1}{\eta}\ln \left(\frac{p_{\beta}}{p_{{T+\Gamma+1}}}\right) \leq 
    \frac{1}{\eta}\ln (e^{\eta\cdot (c-1)})=c-1
   \end{equation*}
Combining this with
\cref{eq:bound-beta2} we get
\begin{equation*}
\frac{1}{\eta}\ln \left(\frac{p_{\beta}}{p_{{T+\Gamma+1}}}\right) + \Delta' +\beta -(\alpha+1) \leq  
     c-1  + \Delta' + \frac{1}{\eta(c'-1)}\cdot \ln \left(\frac{{v_{max}\cdot e^{\eta\cdot (c-2)}}}{p_{min}}\right) \leq \Gamma
\end{equation*}
where the rightmost inequality follows from 
the assumption about
$\Gamma$.
\end{proof}
\end{proof}

\section{Lower Bounds}

In this section we show that certain limitations of our
main theorem are inevitable. 
We do so
by proving 
appropriate lower bounds.  
Our lower bounds apply 
more generally 
to classes of algorithms of which
the \EIP algorithm is a member, some
to all online algorithms (see \cref{sec:online-lb-size}) and others just for
the subclass of price-based algorithms (see \cref{sec:price-lb}).  Where there are gaps between our upper bound for the \EIP algorithm and our general lower bounds, we also show that the upper bound
is in fact the tight one for the \EIP algorithm.

Our lower bounds apply to algorithms that, like the \EIP algorithm, are allowed  both  some slackness $\Delta$ and some extension $\Gamma$ in order to
compete with the optimum schedule.  As these
two notions of augmentation may be traded-off against each other to some extent\footnote{{All our lower bounds 
hold even for a single horizon $T$ that is known to the algorithm in advance. In that setting, shifting between the two parameters 
is indeed possible.}
If there is a single horizon $T$ that is
known by the algorithm, the extension may be completely transferred to additional slackness by immediately scheduling at the last block all transactions that were intended for the extension period. {Similarly, slackness can be shifted to an extension when $T$ is known, by postponing some scheduling to the extension time.}}, we will generally prove lower bounds on their sum. 

\subsection{Augmentation of Price-Based Algorithms must depend on Value Range}\label{sec:price-lb}

In this section we prove 
that the dependence of $\Gamma$ or $\Delta$ on
the range of per-unit values is unavoidable for {\em price-based algorithms with good welfare guarantee}, proving \cref{prop:intro-value-range-lb}.

\begin{proposition}
    \label{prop:price-based-extension-lb} 
Fix any price-based algorithm with average block size limit $B$, slackness $\Delta$, and extension $\Gamma$. 
There exists an instance with values per unit in the range $[1,H]$ for which the algorithm does not obtain {better than} 
a  
$H^{1/4^{(\Gamma+\Delta)}}$  
fraction  of the optimal social welfare of a schedule with worst-case block size limit $B$ (with no slackness or augmentation).
Thus, if a price-based algorithm guarantees a constant fraction of this optimum, then $\Gamma+\Delta$ must grow as $\Omega(\log \log H)$.
\end{proposition}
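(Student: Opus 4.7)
My approach is to prove this via an adaptive adversarial construction that exploits the defining restriction of price-based algorithms: the price $p_t$ at every round is a deterministic function of the on-chain history from rounds $1,\ldots,t-1$, so an adversary can simulate the algorithm and commit to round-$t$ input with full knowledge of $p_t$. Setting $m=\Gamma+\Delta$, the plan is to build an input that lives on a carefully chosen geometric grid of per-unit values inside $[1,H]$, with every transaction of size $B$ so that each block accommodates exactly one. The offline benchmark $OPT_B$ will collect value-$H$ transactions exclusively, whereas I want to force $ALG$ to fill nearly all of its augmented blocks with transactions of much lower per-unit value.

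The adversary would maintain a feasibility interval $[L_t,U_t]\subseteq[1,H]$ of candidate top-value levels, starting from $[L_1,U_1]=[1,H]$. At each round it inspects $p_t$ and plays one of two moves, based on a chosen geometric midpoint of $[L_t,U_t]$:
\begin{itemize}
\item If $p_t$ lies below the midpoint, release transactions just above $p_t$; the algorithm must schedule them by the maximality assumption, committing a block to a suboptimal-value transaction and raising the floor $L_{t+1}$ of the feasibility interval.
\item If $p_t$ lies above the midpoint, release transactions just below $p_t$; the algorithm ignores them, the price-update rule then records ``no demand,'' and the ceiling $U_{t+1}$ drops.
\end{itemize}
In both branches the true top-value ($\approx H$) transactions are withheld until the very last rounds, after the algorithm's price trajectory is locked into a misleading regime.

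By induction on the augmentation units consumed I would maintain the invariant that, after the $m$ augmentation units have been absorbed, every transaction scheduled by $ALG$ has per-unit value at most $H^{1-1/4^{m}}$ (equivalently, $U_m \le H^{1-1/4^{m}}$). A final exploitation phase then releases a long run of value-$H$ transactions at a round chosen so that $ALG$, with its remaining budget inside $[1,T+\Gamma]$, captures at most $o(T)$ of them, while $OPT_B$ schedules value-$H$ transactions throughout $[1,T]$ for welfare $\Theta(TBH)$. Bounding $SW(ALG,[1,T+\Gamma]) \le (T+\Gamma)B\cdot H^{1-1/4^{m}} + o(TBH)$ and taking $T$ large enough gives $SW(ALG)/SW(OPT_B) \le H^{-1/4^{m}}$, and the $\Omega(\log\log H)$ consequence follows by inverting this bound and taking two logarithms.

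The main obstacle is obtaining the $4^m$ rate rather than the $2^m$ rate of a naive binary search. This requires designing the value grid and the two-pronged adversary response so that each consumed augmentation unit shrinks $\log(U_t/L_t)$ by a factor of $4$, not $2$. Intuitively, the algorithm's round-$t$ behaviour carries two usable bits of information back to the adversary---whether the current block filled up, and how its price subsequently updated---so the adversary can afford a ``quartering'' step per round rather than a plain halving. Making this quartering precise across an arbitrary (possibly exotic) price-update rule, and handling the distinction between slackness units (which let $ALG$ over-allocate inside some window) and extension units (which let $ALG$ push scheduling past $T$), is the principal technical work; the remainder of the proof is bookkeeping on the feasibility interval and a clean separation between the ``deception'' rounds and the final exploitation phase.
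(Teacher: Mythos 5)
Your overall instinct---that a price-based algorithm can only learn a little about the value scale from on-chain feedback, so an adversary that simulates it can keep its scheduled values low---is the right one, but the construction as proposed breaks at the exploitation phase. You want the value-$H$ transactions ``withheld until the very last rounds'' and simultaneously want $OPT_B$ to ``schedule value-$H$ transactions throughout $[1,T]$ for welfare $\Theta(TBH)$''; these are incompatible, because the benchmark is also forbidden from scheduling a transaction before it arrives. If the high-value transactions arrive early enough that $OPT_B$ collects $\Theta(T)$ of them, then they are pending for $\Theta(T)$ rounds of the algorithm's run as well, and the adversary cannot keep them out of the blocks: once the algorithm posts a price above the decoys' values (e.g.\ the algorithm that ignores history and always posts price $H$), maximality forces the value-$H$ transactions in, and the algorithm recovers a constant fraction of the optimum. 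So a ``deception then exploitation'' schedule with a known top value $H$ cannot give a super-constant gap. The fix, which is what the paper does, is to make the \emph{maximum value present in the instance} itself uncertain: the paper uses $2^{R}+1$ nested scenarios (scenario $m$ contains $R+\Delta$ unit-size transactions of each value $r^i$, $i\le m$, all arriving at time $1$), so that pricing too high is punished by empty blocks for the whole horizon, and pricing low yields only the cheapest admissible value.

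The second gap is the source of the $4^{\Gamma+\Delta}$ rate. Your ``two usable bits per round'' heuristic does not hold up: the price trajectory is a deterministic function of the on-chain history, so it carries no information beyond what was scheduled, and in any case the relevant information flow is to the \emph{algorithm} (how fast it can locate the top value), not to the adversary, who can simulate everything anyway. In the paper the per-round information is a single bit (block empty, or filled at a value determined by the posted price), giving a binary decision tree of depth $R=T+\Gamma$ with at most $2^{R}$ leaves; pigeonhole over the $2^{R}+1$ scenarios yields two indistinguishable scenarios $m<m'$, on the larger of which the algorithm never schedules value above $r^{m}$ while $OPT_B$ earns $r^{m'}T$. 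The factor $4$ then comes purely from the choice $T=\Gamma+\Delta$ (which also caps the capacity advantage $(T+\Gamma+\Delta)B$ at twice $OPT_B$'s $TB$), since then $R\le 2(\Gamma+\Delta)$ and one can take $r=H^{1/2^{R}}\ge H^{1/4^{\Gamma+\Delta}}$, giving a welfare ratio of $2/r$. Relatedly, your induction ``on augmentation units consumed'' is not well founded: deception rounds need not consume any slackness or extension, and the binding constraints are the global ones (total capacity $(T+\Gamma+\Delta)B$ and horizon $T+\Gamma$), which should be accounted for once at the end rather than round by round.
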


\begin{proof}
    Given an algorithm with parameters $\Delta$ and $\Gamma$, 
    {and assume wlog that $B=1$,} we will
    choose some parameters $T, r$ and design a set of input scenarios 
    that for at least one of them the algorithm 
    cannot achieve more than 
    $(1+(\Gamma+\Delta)/T) / r$ fraction of the social
    welfare. This, with our choices of parameters $r$ and $T$, will give
    the desired upper bound.
    
    We will look at our algorithm running for $R=T+\Gamma$ steps, 
    where all transactions arrive at time $1$.   
    We will have $2^R+1$ possible inputs scenarios, 
    where in input scenario number $m$ ($0 \le m \le 2^R$)
    we have, for every integer $0 \le i \le m$, exactly $(R+\Delta)$ 
    transactions with value $r^i$ 
    (all arriving at time 1), each of size $1$.  
    We also specify how our adversary will respond
    to every price $p$ in each input scenario $m$.  If $p>r^m$ then there are no 
    transactions that are willing to pay this price so no transactions will be scheduled.
    if $p \le r^m$ then the adversary will always schedule only transactions with the lowest possible 
    value (i.e. those with value $r^i$ for the smallest
    $i$ so that $r^i\ge p$).  Note that there are enough transactions of this
    value to suffice for anything that can be scheduled by the end of time $R$.

    Now let us look at a price-based algorithm running for $R$ steps on one of these
    input scenarios.  For any price $p$ that is queried by the algorithm there are
    at most two possible answers by the adversary on any one of these scenarios: 
    either no transaction is scheduled or as many transactions as allowed by
    the block's capacity, each of the value $r^i$ for the $i$ completely determined
    by the price, is scheduled.  It follows that the complete set of possible
    behaviors of the algorithm on all possible scenarios can be described as a 
    binary tree
    of depth $R$, containing at most $2^R-1$ possible prices in all and splitting the
    possible transactions into the $2^R$ leaves of the tree.
    Since there are $2^R+1$ possible scenarios, two of them must map into the same
    leaf and thus produce the exact same schedule.  So let us assume that input 
    scenarios $m<m'$ get the same behavior by the algorithm, which must thus never 
    schedule any transactions with any value strictly 
    greater than $r^m$ as such do not exist in scenario $m$.  

    So now let us compare the social welfare of our algorithm on input scenario $m'$ to
    that of the optimal schedule. The optimal schedule for $T$ blocks
    with worst-case block size $1$
    always uses transactions with value
    $r^{m'}$ for a total welfare of $r^{m'} \cdot T$.  Since our algorithm cannot
    use transactions with value greater $r^m$ and has slackness $\Delta$
    and runs for $\Gamma$ extra steps, its welfare is bounded by 
    $r^m \cdot (T+\Gamma+\Delta)$.  The fraction of welfare obtained in scenario $m'$ 
    is bounded from above
    by $(1+(\Gamma+\Delta)/T) / r$.

    Now for our choice of parameters.  We let $T=\Gamma+\Delta$ which ensures that the numerator 
    in the expression above
    is the constant $2$, and thus, the fraction of social welfare (with respect to the benchmark of any schedule with worst-case block size limit $1$, and no slackness or augmentation) 
    that is achieved by the algorithm is bounded from above by $2/r$.  Our choice of $T$ also ensures that $R = T + \Gamma\le 2T = 2(\Gamma+\Delta)$
    and since for our choice of $r$ we only needed that $r^{2^R} \le H$ we may choose 
    $r=H^ {1/2^{2T}}=H^{1/2^{2(\Gamma+\Delta)}}= H^{1/4^{(\Gamma+\Delta)}}$, 
    which concludes the proof.
\end{proof}

\subsection{Online Algorithms need to Relax the Maximum Block Size}\label{sec:online-lb-size}

In this section we restate and prove \cref{prop:intro-lb-block}, showing that every online algorithm with maximum block size $c\cdot B$ for $c<2$, must lose a constant fraction of welfare.

\begin{proposition} \label{c2}
   Any online algorithm that produces schedules with maximum block size $c \cdot B$, for some {$c$ with $1\leq c<2$,} 
   and average 
   block size limit $B$ with slackness $\Delta(T) = o(T)$, even with an extension $\Gamma(T)=o(T)$, 
   must lose at least 
   $\delta_0=\Omega(2-c)$ 
   fraction of social welfare relative to schedules with worst-case block size limit $B$ 
   (even when the values of all transactions are in \{1,2\}). 
\end{proposition}

{\bf Remark:} this proposition applies to the case where the largest transaction size can be as large as $B$.  
If we have an upper bound on $q_{max}/B$ then
the lower bound applies whenever $c < 1 + q_{max}/B$, matching the positive result in \cref{sec:greedy}.

\begin{proof} 
    Here is the construction for the lower bound (assuming w.l.o.g. that $B=1$): 
    We take $T$ to be large enough so that $\Gamma$ and $\Delta$
    that are $o(T)$ are as small as we want relative to $T$.
    In each of 
    the first $T/2$ steps we get one 
    ``red'' transaction with size $1$ and value 1 and one
    ``green'' transaction with size $c/2+\varepsilon$ and value 
    $2$.  
    Notice that as $c>1$ at most one of these transactions can fit into each block.  
    Let us look at $G$ which is the number of green transactions that were scheduled
    during the first $T/2$ steps.

    Case I: $G \le T/4$.  In this case the total value achieved in the 
    first $T/2$ steps is at most 
    {$2 \cdot G+1 \cdot (T/2-G)\leq T/2+G \leq 3 T/4$.} 
    In this case consider the case where for the next $T/2$ steps, 
    in each step a new transaction of size 1 and value $2$ enters.
    In this case the best that the mechanism can do is get value $2$ for each
    of the remaining $T/2$ steps and the extension period $\Gamma = o(T)$ for a total social welfare of at most 
    $7T/4 + o(T)$.  
    In comparison the optimal schedule with block size 1 would schedule all the green transactions in the first half, getting a value of $2$ every step for a total of $2T$,
    so the ratio of the two welfare expressions is bounded
    from above by $7/8 +o(1)$.

    Case II: $G>T/4$.  In this case consider the case where for the next $T/2$ steps we get 
    a large number of small transactions of size $\gamma$ with 
    value $\gamma$ (i.e. value per unit of $1$) each.
    {An optimal schedule will schedule red transactions in the first $T/2$ rounds for value of $T/2$, and all $T/2$ green transactions in the second $T/2$ rounds for value of $2\cdot T/2=T$. Additionally, all leftover capacity in the blocks containing green transactions, which have total capacity of $T/2\cdot (1-c/2-\varepsilon)$, will be filled with the small transactions, getting value of 1 per unit. So the total welfare will be $3T/2+T/2\cdot (1-c/2-\varepsilon) = T\cdot (2-c/4-\varepsilon/2) $.
    The algorithm is getting value of only $2\cdot G+1\cdot(T/2-G)$ in the first $T/2$ rounds, and value of only $(2+(1-(c/2+\varepsilon))\cdot (T/2-G)+1\cdot G$ in the second $T/2$ rounds. 
    It can only get an additional $o(T)$ due to the slackness and extension.  
    So the total value is at most 
    $3T/2 + (1-c/2-\varepsilon)\cdot (T/2-G) +o(T) < 3T/2 + (1-c/2-\varepsilon)\cdot T/4 +o(T) = T\cdot (7/4-c/8+O(\varepsilon)+o(1))$.
    We conclude that the ratio of the welfare of the algorithm to the welfare of the optimum is at most 
    $\frac{T\cdot (7/4-c/8+O(\varepsilon)+o(1))}{T\cdot (2-c/4-\varepsilon/2)} = \frac{(14-c+O(\varepsilon)+o(1))}{(16-2c-4\varepsilon)} =
     \frac{((16-2c)-(2-c)+O(\varepsilon)+o(1))}{(16-2c-4\varepsilon)}\leq  1- \frac{2-c}{14} + O(\varepsilon)+o(1)
    $ (as $2>c\geq 1$). 
    And thus the loss is at least $\frac{2-c}{14}- O(\varepsilon)-o(1)$.
    }

\end{proof}

\subsection{Lower Bounds Examples for the \EIP Algorithm}\label{sec:EIP-LB}

Our lower bounds for general classes of algorithms presented above do not match the upper bounds of our main theorem 
for the \EIP algorithm (\cref{thm--ub--eip1559-optimal-frac}).  
In this section we observe, by providing examples, that the upper bounds of our main theorem are in fact tight for  the \EIP algorithm. 

\subsubsection{$c=2$ Does not Suffice for the \EIP Algorithm}

In \cref{c2} we showed that the maximum block size must be chosen to be at least
$2\cdot B$ in any online algorithm that attempts to compete with offline
schedules whose worst case block size is $B$ (when $q_{max}=B$), i.e., in our notation,
the lower bound showed that we need $c \ge 2$. 
\cref{warmup} showed that 
$c=2$ indeed suffices for the online algorithm that we presented.  However,
our main theorem for $q_{max}=B$ 
requires the maximum block size to be strictly greater than twice the
average block size {limit}, $c > 2$, and in fact the result also 
uses an  extension that grows in rate that is 
inversely proportional to $c-2$.
So the question is whether having the maximum block size $2 \cdot B$ in the \EIP algorithm, i.e. $c=2$, suffices
for near-optimality.  This question is especially interesting since the actual choice of parameters 
in Ethereum is exactly $c=2$, as the target block size is {$B=18M$} 
gas and the maximum block size
is defined to be {$2B=36M$} 
gas.

Here is an example showing that
the \EIP algorithm with $c=2$, even with an extension that grows as $\Gamma(T) =o(T)$,
 cannot give even half of the
optimum social welfare.\footnote{
This example also 
shows that even when $c>2$, as $c$ approaches $2$, the extension length must grow inversely proportional to $c-2$.  Simple tweaks in the example will show that even longer extensions cannot guarantee even a lower fraction of the welfare.}
}
For notional convenience let us assume without loss of generality that $p_{min}=1$ 
(otherwise just scale all the values in the rest of the proof by $p_{min}$).
Our example starts with steps where no transactions arrive, and so blocks are empty and the price
decreases to $p_{min}=1$.  
From this point on, every time step $t$,
we get as input one ``high'' transaction
of size exactly $B$ and value per unit $10$, and additional ``low'' transactions of total size $(1+\varepsilon)\cdot B$
and value per unit $2$, where $\varepsilon>0$ is a parameter, to be specified below, that may be as small as we wish.  
(The size composition of the low transactions does not matter as long as their total
size is $(1+\varepsilon)\cdot B$.)  In our example the low transactions offer a high tip, i.e. the adversary
always schedules them before the high transaction. 

Now let us see how the \EIP algorithm deals with this input.  As long as $p_t \le 2$, both the high and the low
transactions can be scheduled, and since the 
low transactions have priority they will be scheduled first, filling the block to size $(1+\varepsilon)\cdot B$.
At this point the high transaction cannot get scheduled, since scheduling it will just exceed the maximum
allowed block size {of $2\cdot B$}, so the block closes at size $(1+\varepsilon)\cdot B$.  Thus the next price will be
$p_{t+1} = p^t \cdot e^{\varepsilon \eta}$.  Notice that at this rate of price increase, it will take 
$\Omega(1/(\varepsilon\eta))$ blocks for the price 
to reach 
the value 2. 
As long as this is the case,
the \EIP algorithm's welfare per block is less than a quarter of 
the optimal schedule's welfare per block (since the optimal schedule can at the very least accept a high 
transaction of value 10 every block
while the algorithm accepts transactions of value $2 \cdot (1+\varepsilon)$ which for small enough $\varepsilon$
is less than $2.5=10/4$).
Now, if we take $T$ to be half of the time it took for the price to rise to $2$ (so, $T=\Omega(1/(\varepsilon\eta))$)
we must have the extension also be at least $\Gamma \ge T$ in order to reach even half of the optimal social
welfare (since the algorithm that accrues welfare at a rate of a quarter of the optimal, must run at least for 
twice as many blocks in order to reach half the social welfare).  In order to show that no
extension $\Gamma(T)=o(T)$ suffices to get even half of the social welfare on this example,
just choose $\varepsilon = O(1/(\eta T))$
for $T$ so that $\Gamma(T) \le T$.\footnote{When $c>2$ but is close to 2, one may take this example
with $\varepsilon \approx c-2$
and get a lower bound of $\Omega\left(\frac{1}{\eta(c-2)}\right)$ on the extension.}

Thus, this lower bound shows that Ethereum's choice of parameter $c=2$ is problematic when 
there are transactions of size $q_{max}=B$ that contribute significantly to the welfare. 
As the general version of our main theorem showed, this problem disappears once transaction sizes are limited by some $q_{max}<B$.

\subsubsection{\EIP suffers a single logarithmic 
loss in the range size}

While, as we showed in \cref{warmup}, online algorithms may compete with the optimum
using constant slackness and extension ($\Delta=1$ and $\Gamma=1$), 
\cref{prop:price-based-extension-lb} proved a lower bound of 
$\Delta+\Gamma = \Omega(\log\log H)$ for price-based algorithms when the transaction per-unit values
are in the range $[1,H]$.  Our upper bound for the \EIP algorithm 
in the main theorem is exponentially worse, with
{$\Gamma=O(\log H)$. Additionally, by \cref{prop:EIP-slack} we know that  the \EIP algorithm has {slackness} $\Delta=O(\log H)$.

We show that these upper bounds
are the correct ones for the \EIP algorithm.
Our example starts with steps where no transactions arrive, and so blocks are empty and the price
decreases to $p_{min} \le 1$.  
{At this point in time,  a large amount of 
``high'' transactions with value per unit of $v=H$ arrive. Additionally, at that time also arrive a large amount of 
``low'' transactions with value of $L$ per unit,} 
where $H>L>1$ are parameters.
The exact size composition of low and high transactions does not matter, as
long as there is enough demand in each of these values to always completely
fill
blocks to the maximum block size.  For our example, the low transactions
will offer high ``tips'', i.e. the adversary will give them priority.

So how will the \EIP algorithm react to this input?  After the price decreases to 
$p_{min}\leq 1$, only low transactions are scheduled, filling blocks to the
maximum block size, so the prices will increase as 
$p_{t+1}=p_t \cdot e^{(c-1)\eta}$, where $c$ is the maximum block size 
multiplier.  So after $k$ steps we have 
$p_{t+k} = p_t \cdot e^{k\cdot (c-1)\eta}$.  As long as $L \ge p_t$ the adversary will only schedule
low transactions, while the optimum schedule will only have high
transactions, getting a ratio of $H/L$ in the social welfare. This
can go on for at most $k=\Theta(\log L/((c-1)\eta))$ steps (after which $p_t$ that started at $p_{min} \le 1$ can
exceed $L$), which
by choosing $L$ to be an arbitrary constant factor 
lower than $H$ (so
$\log H = \Theta(\log L)$)
gives us that extension 
$\Gamma \ge k = \Omega(\log H/((c-1)\eta))$ is needed before
we get better than the constant 
factor approximation $H/L$ to the welfare.  Also,
looking at these $k$ steps of price increase, note that all blocks
are completely full, i.e. their total size is $k\cdot c\cdot B$, which
implies that the slackness is 
$\Delta \ge k \cdot (c-1) = \Omega(\log H/\eta)$.

\section{Model Extensions}\label{sec:ext}
In this section we consider two model extensions: to partially patient bidders and to multiple resources. In both cases we present impossibilities, showing that losing a constant fraction of the welfare is inevitable even for general online algorithms (not only price-based algorithms or the \EIP algorithm).   

\subsection{Partially Patient Bidders: Online Algorithms must Lose a Constant Factor}

All our results so far assumed ``patient bidders'', i.e. where a transaction's value
for its user remains the same over time.  The opposite assumption of ``impatient
bidders'' assumes that a transaction {\em must} be scheduled in the immediate block
or it loses all value for its user.  A more general model would capture some sensitivity of the value of a transaction on the time of its execution, where
the value decreases with time.  There are two simple models for such dependence on time:

\begin{definition}
    A transaction with value $v$ and {\em discount factor} $\rho$ 
    (where $0 \le \rho < 1$) 
    has value for its
    user of 
    $v \cdot (1-\rho)^{t_e-t_a}$, where $t_a$ is its arrival time and $t_e$ its execution time.
\end{definition}

\begin{definition}
    A transaction with value $v$ and {\em patience} $p$ has value of $v$ for its
    user if it has been scheduled within $p$ blocks of its arrival time and value $0$ otherwise.  I.e. if $t_a$ is its arrival time and $t_e$ its execution time then
    the user's value is $v$ if $t_a \le t_e \le t_a+p$ (and $0$ otherwise).
\end{definition}

Thus the fully patient model corresponds to patience level $p\rightarrow\infty$ or 
discount factor $\rho=0$, while the 
impatient model corresponds to $p=0$ or $\rho\rightarrow 1$.
Clearly the EIP-1559 algorithm does not allow its input to specify any patience level 
or discount factor.  It is not hard to observe that if we just run EIP-1559 
on partially patient bidders without taking the patience or discount factor 
into account then since it does not distinguish between ``new'' transactions and
old ones that already lost most or all of their value, it cannot produce efficient results.  But perhaps if we just let the algorithm at every block take into account the
{\em current} value of the transaction, then we regain efficiency?  I.e. for the
discount factor model, at block $t$,  
a pending transaction $i$ that has already arrived is considered for allocation if and only if 
$v_i \cdot (1-\rho_i)^{t-t_i} \ge p_t$.  Unfortunately, we get a negative answer and,
in fact, lower bounds for any online algorithm.  Significantly, these lower bounds
hold for arbitrarily high patience (or low discount factor).

\begin{proposition}
    Fix an online algorithm with average block size limit $B$, slackness $\Delta$,
    and extension $\Gamma$, where 
    $\Delta(T)+\Gamma(T)=o(T)$. 
    For every {maximum} discount rate ${\rho_{max}}>0$ 
    there exist a
    time horizon $T$, and an input sequence where
    {every bidder $i$ has a 
    non-negative discount rate $\rho_i \in \{0,\rho_{max}\}$ 
    and value $v_i \in \{1,2\}$,} for which
    the algorithm 
    loses 
    at least a fraction $\delta_0=1/20-o(1)$ of 
    welfare relative to the optimal schedule with
    worst-case block size limit $B$.   
\end{proposition}

\begin{proof}
    Fix an online algorithm with average block size {limit} $1$, slackness $\Delta(T)$,
    and extension $\Gamma(T)$.
    Choose $T$ that is larger than $\Gamma(T)+\Delta(T)$ by a large factor $K$,
    and also such that $(1-{\rho_{max}})^{T/3} \le 1/2$.
    Let $p=T/3$
    and denote $\rho = {\rho_{max}}$.
    Now consider the following input scenario: At block 1, 
    we have $p$ ``patient'' transactions
    arriving, each with size $q_i=1$, value $v_i=2$, and discount rate $\rho_i=0$.
    Also at each of the first $p$ blocks arrives one ``hasty'' transaction, each with 
    size $q_i=1$, value $v_i=1$, and discount rate $\rho_i=\rho$.  So now our 
    algorithm has to decide what to schedule during the first $p$ steps: the high-value
    patient transactions or the low-value hasty transactions. 

    Case I: by the end of time $p$ at least $p/2$ hasty transactions have been
    scheduled.  In this case, for block $p+1$ our adversary will choose 
    as input $2p$ more transactions each with size $q_i=1$, value $v_i=2$, and 
    discount rate $\rho_i=0$.  Now by the end of time $T=3p$ at most
    $T+\Delta(T)+\Gamma(T) = T \cdot (1+O(1/K))$ transactions have been scheduled, 
    at least $p/2$
    of them of value $1$ and the rest of value $2$.  
    So the total social welfare achieved is at most 
    $2 \cdot (5/2+O(1/K)) p + 1 \cdot p/2 = (11/2+O(1/K))p$.
    On the other hand, the optimal offline benchmark would only schedule the 
    high-value patient transactions for a social welfare of $2 \cdot 3p = 6p$, so
    the ratio of the two welfare terms is $11/12+O(1/K)$ which, by choosing $K$ large enough, 
    is less than $19/20$.

    Case II: by the end of time $p$ at most $p/2$ hasty transactions have been
    scheduled.  In this case, in each of the next $p$ steps we get a single 
    transaction
    with size $q_i=1$, value {$v_i=2$}, and 
    discount rate $\rho_i=\rho$ (and nothing more arrives in the last $p$ steps).
    We can assume wlog that our online algorithm was ``clever enough'' to schedule
    all new transactions during the ``middle'' $p$ steps as
    otherwise switching a scheduled hasty transaction or patient transaction
    with the new transaction that
    arrived at that time step will only increase the social welfare (since both the hasty
    transaction and the new transaction lose value at the same rate $\rho$ but the
    new transaction {has higher current value}. Patient transactions do not lose value at all.)  
    It follows that at most
    $\Delta$ hasty transactions were scheduled 
    during the ``middle'' $p$ blocks and so the social 
    welfare that our algorithm gets can be bounded 
    from above by $(19/4 + o(1)) \cdot p$:
    $2p$ from all the new transactions (that are scheduled in the middle $p$ steps) 
    plus $2p$ from all the patient transactions plus at most $3p/4+\Delta$ from the hasty transactions since at least
    half of them (minus $\Delta$) 
    are scheduled in the last $p$ blocks after they have lost at least half their value (since each of them waited at least for the middle $p$ steps and
    $(1-\rho)^p \le 1/2$).  The optimal offline solution is to schedule the hasty
    transactions as they come in the first $p$ steps, schedule the 
    new transactions, as
    they come, in the middle $p$ steps, and leave the patient transactions for the
    last $p$ steps, getting the full value of $5p$.  
    The ratio between these
    welfares is at most $19/20 + o(1)$, {and thus the loss is at least $1/20-o(1)$}. 
\end{proof}

Notice that the proof uses transactions with different discount rates.  
It is
interesting to ask what can be done when all transactions share the same, global, discount
rate.  We conjecture that, in fact, the modified EIP-1559 (that takes into account 
current values) does produce nearly-optimal schedules with a loss that approaches 
the loss of the fully patient case, as the {maximum} 
discount rate ${\rho_{max}}$ approaches 
$0$.

For the other model, of bidders with given patience levels, we
can show that having a global patience level does not suffice and 
exhibit an
impossibility result for online algorithms where all transactions share 
a fixed, arbitrarily high, patience level $p_i=p$.

\begin{proposition}
    Fix an online algorithm with average block size limit $B$, slackness $\Delta(T)=o(T)$,
    and extension $\Gamma(T)=o(T)$
    then {there exists $p^*$ such that
    for every global patience level $p \ge p^*$ and every sufficiently large time horizon
    $T \ge 2(p+1)$},
    there exists an input sequence where
    every bidder $i$ has 
    patience $p_i=p$ and value $v_i\in \{1,2\}$,
    for which the algorithm loses at least a fraction 
    $\delta_0=1/10$ of welfare relative to the optimal schedule with
    worst-case block size limit $B$.
\end{proposition}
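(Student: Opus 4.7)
The plan is to adapt the two-phase adaptive-adversary template of the preceding (discount-rate) proposition. Normalize $B = 1$, take $T = 2p$, and consider an arbitrary online algorithm with average block size $1$, slackness $\Delta$, and extension $\Gamma$. In Phase 1 (blocks $[1, p]$), the adversary sends $p$ ``premium'' transactions (value $2$, size $1$, patience $p$) at block $1$, all expiring at block $p+1$, together with one ``basic'' transaction (value $1$, size $1$, patience $p$) at each block $t \in [1, p]$ (so the basic from time $t$ expires at block $t+p$). Let $a$ and $b$ denote the numbers of premium and basic transactions the algorithm schedules in $[1, p]$, so $a + b \le p + \Delta$.

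The analysis case-splits on $a$ at threshold $3p/5$. In the main case $a < 3p/5$, Phase 2 sends $p$ additional premium transactions, one per block in $[p+1, 2p]$. The offline optimum schedules all $2p$ premiums (the $p$ phase-1 premiums in $[1, p]$ and the $p$ phase-2 premiums in $[p+1, 2p]$) for welfare $4p$. The algorithm, however, has $p - a > 2p/5$ unscheduled phase-1 premiums all expiring at block $p+1$; it can rescue at most $1 + \Delta$ of them at that single block, and the rest expire. Accounting for the algorithm subsequently scheduling the $p$ new phase-2 premiums (at most one per block, plus possibly one more in the extension) and filling any residual capacity with basics, its total welfare is at most $a + 3p + O(\Gamma + \Delta)$; for $a < 3p/5$ this is less than $18p/5 + O(\Gamma + \Delta)$, yielding loss fraction at least $(4p - 18p/5)/(4p) - o(1) = 1/10 - o(1)$.

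In the complementary case $a \ge 3p/5$, the algorithm has committed most of Phase 1 to premium, leaving $p - b \ge 3p/5 - \Delta$ pending basics with staggered expiration deadlines $p+1, p+2, \ldots, 2p$. The Phase-2 adversarial response here is designed to compete for precisely these block slots: the adversary sends a carefully calibrated sequence of additional premium transactions in $[p+1, 2p]$, so that any priority-based (premium-first) scheduler must take the new arrivals at each block and let the pending basics expire one by one at their respective deadlines. Meanwhile the clairvoyant offline optimum, knowing the full sequence in advance, interleaves basic and premium scheduling to meet every basic's deadline, again achieving welfare on the order of $4p$. The bookkeeping should then yield a $1/10$ loss fraction analogous to the main case.

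The main obstacle I expect is precisely the case $a \ge 3p/5$: the patience-$p$ model gives the online algorithm significant flexibility to interleave scheduling within each transaction's $p+1$-block window, and a ``smart'' algorithm may attempt to rescue expiring basics using the per-block slack $\Delta$. Calibrating the Phase-2 arrival pattern so that the algorithm's single-slot-plus-slack capacity at every block $p+1, \ldots, 2p$ cannot simultaneously accommodate both a new premium and an expiring basic — while ensuring the offline optimum can still match both — is the delicate technical step, and it is analogous to (but distinct from) the analysis of Case II in the preceding discount-rate proposition.
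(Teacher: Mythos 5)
Your Case 1 ($a < 3p/5$) is sound, but Case 2 is not merely ``delicate'' --- it cannot be completed with your transaction structure, and the algorithm you need to beat is exactly the one you leave unanalyzed. Consider the online algorithm that schedules all $p$ phase-1 premiums in blocks $[1,p]$ and defers every basic (so $a=p$, $b=0$). Because the basics are staggered, the basic arriving at time $t$ remains schedulable through block $t+p$, so all $p$ of them can still be placed one per block in $[p+1,2p]$. If the adversary sends nothing in phase 2, this algorithm collects $2p+p=3p$, which equals the optimum (OPT also has only $2p$ unit slots and must fit the phase-1 premiums into $[1,p+1]$). If the adversary sends $p$ new premiums, the algorithm schedules them in $[p+1,2p]$ for a total of $4p$, again equal to the optimum (only $2p$ slots, maximum per-slot value $2$). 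More generally, after phase 1 this algorithm is in at least as good a position as the clairvoyant benchmark at time $p$ --- all urgent high-value items done, all low-value items still alive with comfortable deadlines --- so no phase-2 continuation creates a constant-factor gap. The root cause is that you made the \emph{high-value} transactions the urgent ones (all expiring at $p+1$) and the \emph{low-value} ones flexible; this removes the dilemma, since ``schedule the valuable stuff first'' is simultaneously the greedy and the safe choice.

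The paper's construction inverts the roles: the $p$ value-$1$ (``green'') transactions all arrive at time $1$, hence all die at block $p+1$ and cannot be deferred, while the value-$2$ (``red'') transactions arrive one per block over the first $p$ steps and so survive deep into the second half. Now the algorithm genuinely must trade off rescuing cheap urgent items against grabbing expensive flexible ones, and the two phase-2 continuations (nothing arrives vs.\ $p$ fresh value-$2$ transactions arrive) punish the two possible phase-1 behaviors with losses of $1/6$ and $1/8$ respectively, both safely above the $1/10$ threshold. A secondary issue: your threshold $3p/5$ yields a loss of only $1/10 - o(1)$ in Case 1, which does not quite establish a loss of at least $1/10$; the paper sidesteps this by proving strictly larger constants in both cases and then absorbing the $o(1)$ terms by taking $p$ large relative to $\Gamma+\Delta$.
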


{\bf Comment:} If the patience of all bidders is just slightly larger than the 
time horizon, $p_i \ge T$,
then we are back to the fully patient case and near-optimality is regained.

\begin{proof}
    {Without loss of generality assume that} $B=1$ and fix an online algorithm with average block size limit $B$ and slackness $\Delta(T)$ and extension $\Gamma(T)$. {Let $p^*$ be such that
    $\Delta(T)+\Gamma(T) \le \varepsilon \cdot T$ for all $T \ge p^*$ for
    some small $\varepsilon$ to be chosen below.
    Given $p \ge p*$,} 
    we will prove the claim for $T=2(p+1)$ and the same applies for any larger $T$ by adding an arbitrarily
    long time prefix with no input transactions.
    
    At time 1 arrive $p+1$ ``green'' transactions with value 1 and size 1. Additionally, every time step for the next $p$ steps
    arrives a single ``red'' transaction with value 2 and size 1.  
    Let us look at how many red transactions
    were scheduled by our algorithm by the end of time $p+1$.

    Case I: at least $(p+1)/2$ red transactions were scheduled.  Since the total number of
    scheduled transactions by this time is at most $(p+1)+\Delta(T)$, 
    this means that at least $(p+1)/2-\Delta(T)$ green
    transactions were not scheduled and can never be scheduled any more since their deadline has passed.
    Now assume that no more transactions arrive in the next $p+1$ steps, so our algorithm can schedule
    the rest of the red transactions but, again, not the unscheduled green ones.  This gives 
    an upper bound of $2(p+1)+(p+1)/2+\Delta(T)=5(p+1)/2+\Delta(T)$ on the social welfare of our online algorithm.  
    The optimal offline
    algorithm could have scheduled only the green transactions during the first $p+1$ steps and then
    the red transactions, one after another, after that for a total social welfare of ${2p + (p+1)= 3p+1}$.
    The ratio between these amounts is $5/6+O(\varepsilon)$, {and thus the loss is at least $1/6-o(1)>1/10$}.

    Case II: less than $(p+1)/2$ red transactions were scheduled by the end of time $p+1$.  Now assume
    that another $p+1$ ``new'' transactions arrive at time ${p+2}$ each with value 2 and size 1.  
    Since the total quantity scheduled in the first $p+1$ steps is at most $(p+1)+\Delta(T)$ and
    since at most $(p+1)/2-1$ of this quantity has a value per unit of 2 and the rest have 
    value 1, then during the first
    $(p+1)$ steps our online algorithm could have made at most $(p+1) + \Delta(T) + (p+1)/2-1$ social welfare.
    In the second half, including the extra allowed $\Gamma(T)$ steps, an upper bound to the
    social welfare that it can make is 
    $2((p+1)+\Delta(T)+\Gamma(T))$.  Thus the total
    welfare of our online algorithm is bounded from above by
    $7(p+1)/2 + O(\varepsilon\cdot p)$ welfare.
    The optimal offline
    algorithm could have scheduled a green transaction in round 1, then the $p$ red transactions during the next $p$ steps and then 
    scheduled the new transactions in the next $(p+1)$ steps for a total welfare of $1+2p+2(p+1) = 4p+3$.  
    The ratio between these amounts is $7/8+O(\varepsilon)$.  Choosing $\varepsilon$ small
    enough that this amount is less than $9/10$ (and thus the loss is at least $1/10$) 
    concludes the proof of the theorem.
\end{proof}

\subsection{Multidimensional Case: Online Algorithms Must Lose a Constant Factor}

In our basic model each transaction has a ``single-dimensional'' size $q_i$.  More
generally one may consider a model where there are $m$ different resources, each
block has a size limit $B_j$ for every resource $j$ and each transaction $i$ uses
the amount $q_{ij}$ of every resource $j$ (see e.g. \cite{MDBFM23,ADM24}).  
The natural generalization of the slackness condition to the multi-dimensional case
would require that every $T$ consecutive blocks use at most $(T+\Delta) \cdot B_j$
amount of each resource $j$.  The introduction of the ``blob'' resource to Ethereum
\cite{4844} is a step in this direction.
While one may hope to extend the near-optimality result to such a multi-dimensional model,
it turns out any online algorithm must lose some constant fraction of social
welfare in the multi-dimensional case.

\begin{theorem}
If there are at least three resources then any online algorithm with average block
size $B_j$ and slackness $\Delta(T)=o(T)$ 
must lose at least a constant fraction 
$\delta_0=1/6-o(1)$
of social welfare on some input sequence,
even when allowed $\Gamma(T)=o(T)$ extension.
\end{theorem}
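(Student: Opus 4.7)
The plan is to establish this lower bound by an adaptive adversarial input tailored to the three-dimensional packing structure, followed by a two-case analysis on the online algorithm's phase-one commitment. The approach mirrors the templates already used in this section (for the $c<2$ and partially patient bounds): produce a single input prefix after which the adversary can adaptively branch into two continuations, neither of which is well-handled by any fixed phase-one policy.

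With $B_1=B_2=B_3=1$, I would use the three ``pair'' transaction types $X_{12}=(1,1,0)$, $X_{13}=(1,0,1)$, $X_{23}=(0,1,1)$ (each of value $1$), possibly supplemented by complementary ``singleton'' transactions that use only one resource each. The key combinatorial fact is that any single block can hold at most one pair-type transaction (any two share a resource and overflow), yet an offline scheduler that spreads the three types across distinct blocks can still collectively exhaust all three resources at the averaged rate. This three-way combinatorial tension has no counterpart in one or two resources, which is why the obstruction kicks in only starting from three.

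In phase~1 of length $\Theta(T)$, the adversary brings only $X_{12}$ transactions. Let $k$ denote the number scheduled by the algorithm by the end of phase~1. Based on $k$, the adversary adaptively picks one of two continuations: one in which it brings singletons on resource~$3$ that pair productively with any surviving $X_{12}$'s (rewarding \emph{saving} the $X_{12}$'s in phase~1), and one in which it instead floods phase~2 with $X_{13}$'s and $X_{23}$'s that conflict with saved $X_{12}$'s and would compete with them for scarce phase-$2$ blocks (rewarding \emph{eagerly scheduling} the $X_{12}$'s in phase~1). The two continuations require contradictory phase-$1$ commitments $k$, so any single online policy must lose welfare in at least one of them.

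The main technical hurdle is to tune the arrival volumes and values so that, for every choice of $k$ the online algorithm could make, the ratio of algorithm welfare to clairvoyant OPT welfare is at most $5/6$ in at least one continuation, giving the claimed $\delta_0=1/6$ loss. Once that calibration is in place, the bookkeeping is routine: because $\Delta(T)+\Gamma(T)=o(T)$, taking $T$ much larger than $\Delta(T)+\Gamma(T)$ reduces the effect of the extra slack blocks and extension blocks to lower-order additive terms that vanish as $T\to\infty$, exactly as in the earlier $o(T)$-extension lower bounds in this section.
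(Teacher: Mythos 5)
There is a genuine gap: your phase-one input does not force the online algorithm into any commitment, so the two-case dichotomy never materializes. In phase one you bring only one transaction type, $X_{12}$. Since nothing else competes for resources $1$ and $2$ during phase one, the algorithm can simply schedule \emph{all} of the $X_{12}$'s eagerly. This eager algorithm then wins both of your continuations. In continuation A (singletons on resource $3$), the singletons are resource-disjoint from everything and face no scarcity, so they all fit in phase two regardless of what happened in phase one --- this branch penalizes nobody, not just the eager algorithm. In continuation B ($X_{13}$'s and $X_{23}$'s), the eager algorithm enters phase two with exactly the same residual instance as OPT faces (and OPT gains nothing by deferring $X_{12}$'s, which conflict with both new types), so it matches OPT there too. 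Hence your construction yields no lower bound at all; the ``calibration of volumes and values'' you defer as routine is precisely where the argument breaks, and no calibration can fix a phase one with a single non-competing type.

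The missing idea, which is the heart of the paper's proof, is to make phase one itself \emph{overconstrained} on a shared resource so that a pigeonhole forces under-allocation of one of two types, and then to punish adaptively whichever type was starved. The paper brings $t$ units of $\{X,Z\}$ and $t$ units of $\{Y,Z\}$ at time $1$; resource $Z$ admits only about $t$ units in the first $t$ blocks, so one type --- say $\{X,Z\}$ --- is scheduled to extent at most $(t+\Delta)/2$. The adversary then floods phase two with singletons $\{X\}$, creating total residual demand of about $3t/2$ on resource $X$ against remaining capacity $t+o(t)$, losing $t/2-o(t)$ value; OPT clairvoyantly exhausts $\{X,Z\}$ in phase one and packs $\{Y,Z\}$ with $\{X\}$ in phase two for value $3t$, giving the $1/6$ loss. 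Note also that your claim that a single block can hold at most one pair-type transaction is not literally valid under the average-block-size-with-slackness model (a single block may exceed $B_j$); only the windowed average matters, which is another reason the argument must be run at the level of aggregate resource capacity over $\Theta(T)$ blocks, as the paper does, rather than per-block packing.
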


\begin{proof}
    Consider the case of three resources, X, Y, and Z, and assume that the
    block size for each resource is $B_X=B_Y=B_Z=1$.
    Let $t=T/2$ where $T$ is the horizon over which our online algorithm
    aims to approximate OPT and fix an online algorithm that uses each 
    resource at total capacity of at most $(t+\Delta)$ during 
    any consecutive $t$ blocks.  
    
    At time 1 we get $t$ demand for
    the bundle $\{X,Z\}$ and $t$ 
    demand for the bundle $\{Y,Z\}$ each at value of 1 per unit of the bundle.  
   
    During the first $t$ steps
    there is only capacity of $t+\Delta$ of resource $Z$ so at least 
    one of the two types of
    demanded bundles is allocated at most at total capacity $(t+\Delta)/2$.  
    Without loss of generality assume that the bundle $\{X,Z\}$ is allocated at most 
    $(t+\Delta)/2$ during the first $t$ steps.  Now at time $t+1$ we get
    another demand of $t$ for the singleton bundle $\{X\}$ again at value 1 per unit of 
    $X$.  (Of course,
    had the bundle $\{Y,Z\}$ been the one that was at least half-un-allocated then
    our adversary would have brought demand for the singleton bundle $\{Y\}$ instead.)
    The best that our online
    algorithm can do at this point is pack singleton bundles $X$ together 
    with any leftover of the original
    bundles $\{Y,Z\}$ into the remaining blocks, and leaving at least
    $t/2-\Delta-\Gamma$ demand for the bundle $\{X,Z\}$ unsatisfied for a total value
    of at most $(5t+\Delta+\Gamma)/2$.
    OPT, on the other hand, could have fully
    allocated everything by allocating only the $\{X,Z\}$ bundles for the first $t$
    blocks and packing the demand for $\{Y,Z\}$ with $\{X\}$ in each 
    of the last $t$ blocks getting
    a total value of $3t$.  The ratio between these two levels of social welfare is
    $5/6+o(1)$.
\end{proof}

\DOUBLEBLIND{
\section*{Acknowledgments} 
We thank Shiri Ron for useful discussions during the early stages of this work.  Noam Nisan thanks his colleagues in Starkware for many useful discussions.  Moshe Babaioff's research is supported by a Golda Meir Fellowship and the Israel Science Foundation (grant No. 301/24). Noam Nisan's research was supported by a grant from the Israeli
Science Foundation (ISF number 505/23).
}

\bibliographystyle{alpha}
\bibliography{bib}

\end{document}